\newtheorem{mydef}{Definition}
\newtheorem{mycoro}{Corollary}
\newtheorem{myprop}{Proposition}
\newtheorem{mytheo}{Theorem}
\newtheorem{mylemm}{Lemma}
\newtheorem{myconj}{Conjecture}
\newtheorem{myexam}{Example}
\author{Andriyan B. Suksmono}
\begin{document}
% -- title --
\title{Probabilistic Construction and Analysis of  Seminormalized Hadamard Matrices}
\maketitle
%
% -- abstract --
\begin{abstract}
Let $\vec{o}$ be a $4k$-length column vector whose all entries are $1$s, with $k$ a positive integer. Let $V=\{\vec{v}_i\}$ be a set of semi-normalized Hadamard (SH)-vectors, which are $4k$-length vectors whose $2k$ entries are $-1$s and the remaining $2k$ are $1$s. We define a $4k$-order QSH (Quasi SH)-matrix, $\vec{Q}$, as a $4k \times 4k$ matrix where the first column is $\vec{o}$ and the remaining ones are distinct $\vec{v}_i \in V$. When $\vec{Q}$ is orthogonal, it becomes an SH-matrix $\vec{H}$. Therefore, $4k$-order SH-matrices can be built by enumerate all possible $\vec{Q}$ from every combination of $\vec{v}_i$, then evaluate the orthogonality of each one of them. Since such exhaustive method requires a large amount of computing resource, we can employ probabilistic algorithms to construct $\vec{H}$, such as, by Random Vector Selection (RVS) or the Orthogonalization by Simulated Annealing (OSA) algorithms. We demonstrate the constructions of low-order SH-matrices by using these methods. We also analyze some probabilistic aspects of the constructions, including orthogonal probability $p_{\perp}$ between a pair of randomly selected SH-vectors, the existence probability $p_{\vec{H}|\vec{Q}}$ that a randomly generated $\vec{Q}$ is in fact an SH-matrix $\vec{H}$, and address the discrepancy of the distribution between the known number of SH-matrix with expected number derived from the probabilistic analysis.

\end{abstract}

%
% -- Section 1: introduction --
\section{Introduction}
A Hadamard matrix is a square matrix whose entries are $1$ or $-1$ and each pairs of its distinct rows (or columns) are orthogonal. These kinds of matrices were first studied by Sylvester \cite{Sylvester1867}. Hadamard realized this kind of matrix when investigating maximal determinant problem \cite{Hadamard1893}. Because of its interesting properties and potential applications, the Hadamard matrix has been extensively studied by many mathematicians and engineers. Formally, we can formulate the matrix by using the following definition.

\begin{mydef}
\label{Def_Hmatrix}
An $m$-order Hadamard matrix $\vec{H}$, where $m$ a positive integer, is an $m \times m$ orthogonal matrix whose entry is either $1$ or $-1$.
\end{mydef} 
We will write the "Hadamard matrix" shortly as the "H-matrix". The orthogonal property in the definition implies 
\begin{equation}
\vec{H}^T \vec{H}= m\vec{I}
\end{equation}
where $\vec{I}$ is an $m$-order, i.e. $m \times m$ identity matrix, and $(\cdot)^T$ denotes the transpose. 

In this paper, when the order of a matrix or a vector is clear from the context of discussion, which in most of cases are $4k$ for a positive integer $k$, it will not be written. Furthermore, we say a square matrix $\vec{A}$ of size $m \times m$ as an $m$-order matrix $\vec{A}$, and an $m$-length or of size $m \times 1$ vector $\vec{v}$ as an $m$-order vector $\vec{v}$. However, when the order needs to be explicitly displayed, it is written as a left-superscript. Therefore, an $m$-order matrix $\vec{H}$ will be written as ${}^m\vec{H}$, and similarly, an $m$-order vector $\vec{v}$ will be written as ${}^m\vec{v}$. 

For conciseness, we adopt shorthand notations for the entries; i.e., the entry $1$ will be written as "+" whereas $-1$ will be written as "-". For examples, H-matrices of order $1$, $2$, and $4$, whose orders are explicitly written, are displayed as follows

% *** H1 ***
\begin{equation}
{}^1\vec{H}=\left(+\right),
{}^2\vec{H}=
\begin{pmatrix}
  + & + \\
  + & -
\end{pmatrix},
{}^4\vec{H}=
\begin{pmatrix}
  + & + & + & +\\
  + & - & + & -\\
  + & + & - & -\\
  + & - & - & +
\end{pmatrix}
\end{equation}

One of the most important properties of the H-matrix is its orthogonality which enables practical application in telecommunications and signal processing. In digital communications, the Hadamard-Walsh codes have been used as a spreading code in CDMA (Code Division Multiple Access) systems \cite{Garg2007,Seberry2005}. The H-matrices are also used to construct an ECC (Error Correcting Code) to protect the messages sent over an unreliable and noisy channel \cite{BoseShrikande1959}. The orthogonal property also make the usage of the Hadamard matrix in signal compression become possible  \cite{Jain1989}. An imaging-related application of the H-matrix is in the 3D spectral imaging described in \cite{Hanley2000}. Further lists and examples of its applications can be found in \cite{HedayatWallis1978} or, more recently, in \cite{Seberry2005}.

An important issue in the H-matrix area is on the construction. Sylvester have used Kronecker product to construct higher order H-matrices from lower order ones. In this method, a $2^k$-order H-matrix ${}^{2^k}\vec{H}$ can be constructed from lower order H-matrices ${}^{2^k-1}\vec{H}$ and ${}^2\vec{H}$ by Kronecker product as follows

\begin{equation}
{}^{2^k}\vec{H}={}^{2^k-1}\vec{H} \otimes {}^2\vec{H}
=
\begin{pmatrix}
  {}^{2^k-1}\vec{H} & {}^{2^k-1}\vec{H}\\
  {}^{2^k-1}\vec{H} & -{}^{2^k-1}\vec{H}
\end{pmatrix}
\end{equation}
Therefore, any $m=2^k$ order H-matrix, where $k$ is a non-negative integer, can be constructed by using the Sylvester method. Consequently, the existence of $2^k$ order H-matrix is guaranteed, which is formulated in the following (Sylvester's) lemma.

\begin{mylemm}
There is a Hadamard matrix of order $2^k$ for every positive integer $k$.
\end{mylemm}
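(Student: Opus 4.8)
The plan is to prove Sylvester's lemma by induction on $k$, using the Kronecker product construction that the paper has already displayed. The statement to establish is that a Hadamard matrix of order $2^k$ exists for every positive integer $k$.

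Let me think about the structure carefully. The base case is $k=1$, where ${}^2\vec{H}$ is the explicitly displayed $2\times 2$ matrix, which one checks directly is Hadamard: its entries are $\pm 1$ and its two columns are orthogonal. For the inductive step, I would assume the existence of a Hadamard matrix ${}^{2^{k-1}}\vec{H}$ of order $2^{k-1}$ and form the matrix via the block construction
\[
{}^{2^k}\vec{H}=
\begin{pmatrix}
  {}^{2^{k-1}}\vec{H} & {}^{2^{k-1}}\vec{H}\\
  {}^{2^{k-1}}\vec{H} & -{}^{2^{k-1}}\vec{H}
\end{pmatrix}.
\]
The task is then to verify that this block matrix is itself Hadamard, i.e. that its entries are $\pm 1$ (immediate, since negation preserves the $\pm 1$ property) and that it satisfies the orthogonality relation ${}^{2^k}\vec{H}^{T}\,{}^{2^k}\vec{H}=2^k\vec{I}$.

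The key computation is the orthogonality check, which I would carry out by multiplying the block matrices. Writing $\vec{A}={}^{2^{k-1}}\vec{H}$ for brevity and using the inductive hypothesis $\vec{A}^{T}\vec{A}=2^{k-1}\vec{I}$, the product expands blockwise into sums such as $\vec{A}^{T}\vec{A}+\vec{A}^{T}\vec{A}=2\cdot 2^{k-1}\vec{I}=2^k\vec{I}$ on the diagonal blocks and $\vec{A}^{T}\vec{A}-\vec{A}^{T}\vec{A}=\vec{0}$ on the off-diagonal blocks. This yields exactly $2^k\vec{I}$, completing the inductive step. Alternatively, I could invoke the general fact that the Kronecker product of two orthogonal matrices is orthogonal, together with the multiplicativity of the Kronecker product under transposition, $(\vec{B}\otimes\vec{C})^{T}=\vec{B}^{T}\otimes\vec{C}^{T}$, to conclude more abstractly that ${}^{2^{k-1}}\vec{H}\otimes{}^2\vec{H}$ is Hadamard whenever both factors are.

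I do not anticipate a serious obstacle here, since the Kronecker construction is already given in the excerpt and the verification is a routine block multiplication. The only point requiring mild care is bookkeeping the normalization constant: one must track that each application of the product doubles the scalar, so that the order-$2^{k-1}$ relation $\vec{A}^T\vec{A}=2^{k-1}\vec{I}$ propagates correctly to $\vec{A}^T\vec{A}=2^{k}\vec{I}$ at the next level, rather than leaving a stray factor. Once the base case and inductive step are in place, the principle of induction delivers the existence of a Hadamard matrix of order $2^k$ for every positive integer $k$, which is the assertion of the lemma.
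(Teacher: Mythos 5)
Your proposal is correct and follows the same route the paper itself relies on: the paper states this lemma without a formal proof, simply asserting it as a consequence of the displayed Sylvester--Kronecker construction ${}^{2^k}\vec{H}={}^{2^{k-1}}\vec{H}\otimes{}^2\vec{H}$. Your induction with the blockwise verification of ${}^{2^k}\vec{H}^T\,{}^{2^k}\vec{H}=2^k\vec{I}$ merely supplies the routine details the paper omits.
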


Reordering or exchanging the rows (or columns) of an H-matrix, transposition, and/or multiplying the rows (column) with $-1$ yields another H-matrix. We call the set of H-matrices obtained by such operations as equivalent H-matrices. 

An H-matrix is said to be seminormalized if the entries of either of the first column or the first row are $1$, and it is normalized if the entries in both of the first column and the first row are $1$. In this paper, we refer seminormalized H-matrix as the one whose entries of the first columns are 1. Any H-matrix can be normalized or semi-normalized by exchanging and/or negating rows or column, which yield an equivalent matrix to the original one.

\begin{myexam}
The followings are equivalent operations to a $4$ order H-matrix, ${}^4 \vec{H}$, that yields another $4$ order H-matrices.

% *** H4 ***
\begin{equation}
\label{EQ_H4}
\begin{pmatrix}
  + & - & - & -\\
  - & - & + & -\\
  + & - & + & +\\
  + & + & + & -
\end{pmatrix} 
\to
\begin{pmatrix}
  + & - & - & -\\
  + & + & - & +\\
  + & - & + & +\\
  + & + & + & -
\end{pmatrix}
\to
\begin{pmatrix}
  + & + & + & +\\
  + & - & + & -\\
  + & + & - & -\\
  + & - & - & +
\end{pmatrix}
\nonumber
\end{equation}
These matrices are equivalents because the second matrix is obtained from the first one by negating the $2^{nd}$ row, while the third matrix is obtained from the second one by negating the $2^{nd}$, $3^{rd}$, and $4^{th}$ columns. The second matrix is an SH (Seminormalized Hadamard)-matrix, whereas the last one is an NH (Normalized Hadamard)-matrix.
\end{myexam}

In addition to the Sylvester method, various kinds of construction techniques have been developed, among others are: Paley's method which is based on finite field \cite{Paley1933}, Dade and Goldberg method that is based on permutation group \cite{DadeGoldberg1959}, the Williamson method \cite{Williamson1944}, finite projective plane based method of Bush \cite{Bush1971a,Bush1971b}, orhogonal design by J. Wallis \cite{Wallis1976}, and computer backtracking method by Hall and Knuth \cite{HallKnuth1965}.

Another important issue in the H-matrix is on its existence. It can be proved that if $\vec{H}$ is an $m$-order H-matrix, then $m$ should be a multiple of four, which is formulated in the following proposition.

\begin{myprop}
If $\vec{H}$ is an $m$-order Hadamard matrix, then $m=4k$ for a positive integer $k$.
\end{myprop}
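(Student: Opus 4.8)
The plan is to normalize $\vec{H}$ and then run a counting argument over three of its rows. First I would invoke the equivalence operations described above: negating any row or column of an H-matrix preserves orthogonality, so without loss of generality I may assume $\vec{H}$ is normalized, i.e.\ its first row and first column consist entirely of $+1$'s. The divisibility content of the statement lives in the case $m\ge 3$ (orders $1$ and $2$ being the small exceptions), so I would assume three distinct rows are available and aim to prove $4\mid m$.

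As a warm-up I would record that $m$ is even. Orthogonality of any row $i\ge 2$ with the all-ones first row forces the entries of row $i$ to sum to zero, so each such row contains exactly $m/2$ entries equal to $+1$ and $m/2$ equal to $-1$.

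The core step is a four-way classification of the columns according to the sign pattern they exhibit in rows $2$ and $3$. I would let $a,b,c,d$ count the columns of type $(+,+),(+,-),(-,+),(-,-)$ respectively, so that $a+b+c+d=m$. The even-ness observation applied to row $2$ gives $a+b=m/2$, and applied to row $3$ gives $a+c=m/2$. Orthogonality of row $2$ with row $3$ gives $a+d=b+c$, because the entrywise product equals $+1$ precisely on the two agreeing types $(+,+)$ and $(-,-)$ and $-1$ on the two disagreeing types. Solving this small linear system forces $a=b=c=d$, and combined with $a+b+c+d=m$ this yields $a=b=c=d=m/4$.

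The conclusion is then immediate: each of $a,b,c,d$ is a count of columns, hence a nonnegative integer, so $m/4$ is a positive integer and $m=4k$. I expect the only real obstacle to be bookkeeping discipline --- ensuring $m\ge 3$ so that rows $2$ and $3$ genuinely exist and are distinct from row $1$, and correctly matching each of the four column types to its $\pm1$ contribution in each inner product. Those points are elementary but are exactly where the argument can go wrong.
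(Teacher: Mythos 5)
Your proof is correct, and it is the classical argument for this fact. Note that the paper itself states this proposition without any proof (it only remarks that ``it can be proved''), so there is no in-paper argument to compare against; your column-classification proof --- normalize, use orthogonality with the all-ones row to get $a+b=a+c=m/2$, use orthogonality of rows $2$ and $3$ to get $a+d=b+c$, and solve to find $a=b=c=d=m/4$ --- is exactly the standard one that would fill that gap. You are also right to flag the $m\ge 3$ caveat: as literally stated the proposition fails for the orders $1$ and $2$ that the paper itself exhibits, so the restriction to $m\ge 3$ (or $m>2$) is not mere bookkeeping but a necessary hypothesis.
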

Additionally, it is believed that there is a Hadamard matrix of order $4k$ for every positive integer $k$, which is formulated in the following conjecture.

\begin{myconj}
For every positive integer $k$, there is a Hadamard matrix of order $m=4k$.
\end{myconj}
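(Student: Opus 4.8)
The statement is the Hadamard conjecture, which is a famous open problem; accordingly, what follows is a strategy together with a clear indication of where it must break down, rather than a complete proof. The approach most in keeping with this paper is a counting, or first-moment, existence argument. Fix $k$ and let $N=\binom{4k}{2k}$ be the number of SH-vectors in $V$. A QSH-matrix $\vec{Q}$ is formed by placing $\vec{o}$ in the first column and choosing $4k-1$ distinct vectors of $V$ for the remaining columns; if the columns of such a $\vec{Q}$ are mutually orthogonal, then $\vec{Q}^{T}\vec{Q}=4k\,\vec{I}$, so $\vec{Q}$ is an SH-matrix and hence a Hadamard matrix of order $4k$. Letting $X$ be the number of orthogonal QSH-matrices, it therefore suffices to prove $X\ge 1$; equivalently, drawing $\vec{Q}$ uniformly at random, it suffices to show the existence probability $p_{\vec{H}\mid\vec{Q}}>0$.

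First I would compute the pairwise orthogonality probability $p_{\perp}$ that two independently and uniformly chosen SH-vectors are orthogonal. Describing a vector through the set of its $2k$ positions carrying $-1$, two SH-vectors with $-1$-supports $A$ and $B$ are orthogonal exactly when $|A\cap B|=k$, so a direct count gives the closed form $p_{\perp}=\binom{2k}{k}^{2}/\binom{4k}{2k}>0$. The orthogonality of $\vec{Q}$ is the simultaneous occurrence, over all $\binom{4k}{2}$ column pairs, of the corresponding orthogonality event; the $4k-1$ pairs involving $\vec{o}$ hold automatically because each $\vec{v}_i$ is balanced, leaving the $\binom{4k-1}{2}$ pairs among the $\vec{v}_i$ as the genuine constraints. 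If these events were independent one would obtain the heuristic $p_{\vec{H}\mid\vec{Q}}\approx p_{\perp}^{\binom{4k-1}{2}}>0$, and multiplying by the number of QSH-matrices would yield a positive expected count $X$ and thus existence.

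The hard part, and precisely the reason the conjecture remains open, is the independence assumption invoked in the last step. The orthogonality constraints on distinct column pairs are strongly correlated, since fixing several columns severely restricts the admissible remaining ones; the true $p_{\vec{H}\mid\vec{Q}}$ can therefore depart from $p_{\perp}^{\binom{4k-1}{2}}$, and nothing in this argument excludes $p_{\vec{H}\mid\vec{Q}}=0$ for some $k$. This is exactly the discrepancy between observed and heuristically expected SH-matrix counts flagged in the abstract. Moreover $p_{\perp}=\Theta(k^{-1/2})\to 0$, so one is requiring $\Theta(k^{2})$ individually unlikely events to hold at once; the bad events ``a given pair is non-orthogonal'' are common rather than rare, so a Lov\'asz-local-lemma approach does not apply off the shelf, and a rigorous route would need genuine second-moment or correlation control, uniform in $k$. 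In the absence of such control the only proofs known proceed by explicit construction: Sylvester's lemma settles $m=2^{k}$, and Paley-, Williamson-, and related constructions cover many further values of $k$, but no single construction or correlation bound is known to reach every $k$. I expect this uniform-in-$k$ correlation-control step to be the true obstacle.
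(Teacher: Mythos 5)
This statement is the Hadamard Matrix Conjecture, which the paper presents explicitly as a conjecture and does not (and cannot) prove, so there is no proof to compare against; you are right to identify it as a famous open problem rather than manufacture an argument. Your first-moment heuristic --- computing $p_{\perp}=\binom{2k}{k}^{2}/\binom{4k}{2k}$ for a random pair of SH-vectors, raising it to the power $\binom{4k-1}{2}$ under an independence assumption to get $p_{\vec{H}\mid\vec{Q}}$, and multiplying by the number of QSH-matrices to get an expected count --- is precisely the analysis the paper itself carries out in Section 4 (its Theorem on $p_{\perp}$, the Proposition on $p_{\vec{H}\mid\vec{Q}}$, and the Lemma on $E[H]$), and your diagnosis that the unjustified independence of the pairwise orthogonality events is the fatal gap is exactly the discrepancy the paper observes when $E[H]$ drops below the known counts $N_{SH}$ for $k\geq 5$.
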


The Hadamard Matrix Conjecture has been verified for $m\leq428$. The Paley's construction gives H-matrices of order $m=q+1=4k$, where $q$ is any prime power  and $q \equiv 3 \mod 4$. Baumert, Golomb, and Hall found H-matrix of order 92 and 156, and Baumert found H-matrix of order 116 and 232. The largest order known H-matrix for any $4k$-order is 428, which has been discovered by H. Kharaghani and B. Tayfeh-Rezaie \cite{Kharaghani2005}. 

In this paper, we propose to construct the SH-matrix from SH vectors, which are $4k$ length vectors with balanced number of $\{ 1, -1\}$ entries as candidates of the column vectors of the SH-matrix. The rest of the paper is organized as follows. In Section 2, we formulate a concept of the SH (Seminormalized Hadamard) Vectors and their properties. Important results of this chapter is the orthogonality relationship among the SH vector, and quantification of orthogonal probability of randomly selected SH vector. Section 3 describe the construction of SH matrix from the SH vectors. We present three construction method, i.e., exhaustive search RVS (Random Vector Selection), and OSA (Orthogonalization by Simulated Annealing). In Section 4, we give the probabilistic analysis that leads into the formulation of the orthogonal probability of a pair of randomly generated SH-vectors and propose existence probability of a $4k$-order SH matrix.
%
% --- Section 2: SH Vectors and Their Properties ---
\section{Seminormalized Hadamard Vectors and Their Properties}

In this section, we develop the concept of SH (Seminormalized Hadamard) vectors $\vec{v}_i \in V$ and determine the number of distinct $4k$-order SH vectors $|V|=N_V$. Then, we identify the orthogonality relationship among the SH vectors and determine the number $N_O$ of vectors in $V$ that are orthogonal to a given $\vec{v}_i \in V$. We will also show that the orthogonality relationship among the SH vectors in $V$ can be represented by an $N_O-$regular $N_V-$order graph.

We will be dealing with vector whose entries are $1$ and $-1$, mostly with even length and balanced number of $1$ and $-1$. In this section, we build such vectors from their basic building blocks of unity vectors, its negated values, concatenations, and rearrangements. We start with the following definition.

\begin{mydef}
\label{DEF_kUnitiyVec}
A $k$-order unity vector, where $k$ a positive integer, is a vector whose (all of its) entries are $+1$
\begin{equation}
 {}^k\vec{1}\equiv \left(
\underbrace{ 
 \begin{array}{cccc}
  + & + & \cdots & +
 \end{array}
 }_{k-number}
 \right)^T
\end{equation}
% *** -1_k ***
And its negated pair is
\begin{equation}
 -{}^k\vec{1} \equiv \left(
\underbrace{ 
 \begin{array}{cccc}
  - & - & \cdots & -
 \end{array}
 }_{k-number}
 \right)^T
\end{equation}.

\end{mydef}
From the $k$-order unity vectors $ \{ \pm {}^k\vec{1} \}$, we can construct the $2k$-order vectors by concatenations as follows,

% *** [1_k 1_k] ***
\begin{equation}
\left(\begin{array}{cc}  {}^k\vec{1}^T & {}^k\vec{1}^T \end{array} \right)^T = \left(
\underbrace{ 
 \begin{array}{cccc}
  + & + & \cdots & +
 \end{array}
 }_{k-number}
\underbrace{ 
 \begin{array}{cccc}
  + & + & \cdots & +
 \end{array}
 }_{k-number}
 \right)^T \equiv {}^{2k}\vec{1}
\end{equation}
%
% *** [-1_k -1_k] ***
\begin{equation}
\left(\begin{array}{cc}  -{}^k\vec{1}^T & -{}^k\vec{1}^T \end{array} \right)^T = \left(
\underbrace{ 
 \begin{array}{cccc}
  - & - & \cdots & -
 \end{array}
 }_{k-number}
\underbrace{ 
 \begin{array}{cccc}
  - & - & \cdots & -
 \end{array}
 }_{k-number}
 \right)^T \equiv -{}^{2k}\vec{1}
\end{equation}
% *** [-1_k +1_k] ***
\begin{equation}
\label{EQ_OHH_vector}
\left(\begin{array}{cc}  -{}^k\vec{1}^T & {}^k\vec{1}^T \end{array} \right)^T = \left(
\underbrace{ 
 \begin{array}{cccc}
  - & - & \cdots & -
 \end{array}
 }_{k-number}
\underbrace{ 
 \begin{array}{cccc}
  + & + & \cdots & +
 \end{array}
 }_{k-number}
 \right)^T \equiv \vec{s}
\end{equation}

% *** [+1_k -1_k] ***
\begin{equation}
\left(\begin{array}{cc}  {}^k\vec{1}^T & -{}^k\vec{1}^T \end{array} \right)^T = \left(
\underbrace{ 
 \begin{array}{cccc}
  + & + & \cdots & +
 \end{array}
 }_{k-number}
\underbrace{ 
 \begin{array}{cccc}
  - & - & \cdots & -
 \end{array}
 }_{k-number}
 \right)^T \equiv -\vec{s}
\end{equation}

In particular, we will be interested to the vector $\vec{s}$  given by Equation \ref{EQ_OHH_vector}, which is further formalized in the following definition.

% -- DEFINITION --
\begin{mydef}
\label{DEF_OHH_vector}
A $2k$-order OHH (Ordered Half-length Hadamard) vector $\vec{s}$, where $k$ is a positive integer, is a vector constructed by concatenating $-{}^k \vec{1}$ with ${}^k \vec{1}$, i.e., it is the vector $\vec{s} \equiv \left(\begin{array}{cc}  -{}^k\vec{1}^T & {}^k\vec{1}^T \end{array} \right)^T$.
\end{mydef}

By inspection of Eq. \ref{EQ_OHH_vector}, we realize that the OHH vector has balanced number of 1 and -1. Accordingly, its inner product with $2k$-length unity vector ${}^{2k} \vec{1}$ will be zero, which is formulated in the following lemma.

\begin{mylemm}
\label{LEMM_OHH_ortho}
The $2k$ order OHH vector $\vec{s}$ is orthogonal to $2k$-order unity vector, i.e., $\left< \vec{s},{}^{2k} \vec{1} \right>=0$.
\end{mylemm}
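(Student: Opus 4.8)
The plan is to evaluate the inner product $\left< \vec{s}, {}^{2k}\vec{1} \right>$ directly from the definitions, since the claim is essentially a statement about the balanced composition of $\vec{s}$. First I would recall that the Euclidean inner product of two $2k$-length vectors is the sum of the componentwise products of their entries. Because every entry of ${}^{2k}\vec{1}$ equals $+1$, this product collapses to simply the sum of the entries of $\vec{s}$, i.e.
\begin{equation}
\left< \vec{s}, {}^{2k}\vec{1} \right> = \sum_{i=1}^{2k} s_i \cdot 1 = \sum_{i=1}^{2k} s_i .
\end{equation}

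Next I would exploit the block structure of $\vec{s}$ fixed in Definition \ref{DEF_OHH_vector} and Equation \ref{EQ_OHH_vector}. By construction $\vec{s}$ is the concatenation of $-{}^k\vec{1}$ followed by ${}^k\vec{1}$, so its first $k$ entries are each $-1$ and its last $k$ entries are each $+1$. Splitting the sum along this boundary gives
\begin{equation}
\sum_{i=1}^{2k} s_i = \sum_{i=1}^{k} (-1) + \sum_{i=k+1}^{2k} (+1) = -k + k = 0 ,
\end{equation}
which is precisely the asserted identity. Equivalently, one may split the inner product into its two half-length blocks and compute $\left< -{}^k\vec{1}, {}^k\vec{1} \right> + \left< {}^k\vec{1}, {}^k\vec{1} \right> = -k + k = 0$, which is perhaps the cleanest way to present it.

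The result is therefore a one-line consequence of the equal counts of $+1$ and $-1$ in $\vec{s}$, and no genuine obstacle arises here. The only point requiring care is to keep the sign bookkeeping of the two blocks consistent with the ordering fixed in Equation \ref{EQ_OHH_vector}, so that the negative block contributes exactly $-k$ and the positive block exactly $+k$. I expect the argument to be fully routine, and its value is mainly as a building block for the orthogonality relations among SH vectors developed later in the section.
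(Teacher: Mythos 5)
Your proposal is correct and follows essentially the same route as the paper: expand the inner product, note that multiplying by the all-ones vector reduces it to the sum of the entries of $\vec{s}$, and then use the block structure of Definition \ref{DEF_OHH_vector} to get $-k+k=0$. No issues.
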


\begin{proof}
By expanding the inner product, we obtain the followings:

\begin{equation}
\left< \vec{s},{}^{2k} \vec{1} \right> = \Sigma_{j=1}^{2k} \vec{s}[j] \cdot {}^{2k}\vec{1}[j] = \Sigma_{j=1}^{2k} \vec{s}[j]\cdot 1 = \Sigma_{j=1}^{2k} \vec{s}[j]
\nonumber
\end{equation}
Then, by Definition  \ref{DEF_OHH_vector} of the OHH vector, we arrive to the following result

\begin{equation}
 \Sigma_{j=1}^{2k} \vec{s}[j]= \underbrace{-1 -1 ... -1}_{k-number} + \underbrace{ 1 +1 ... +1}_{k-number} = 0. 
\nonumber
\end{equation}

\end{proof}

In the next formulation, we will be working with rearranged vectors, which is achieved by permutation of a vector's entries. First, lets define a set $\Sigma$ of $n$-object permutations as follows.

\begin{mydef}
\label{DEF_permSet}
The set of all permutations of $n$-objects will be called the set of $n$-order permutations $\Sigma=\{ \sigma_i\}$.
\end{mydef}
Furthermore, the permutation operation to the  entries of a vector $\vec{v}$ can be defined as follows.

\begin{mydef}
\label{DEF_permVec}
Consider an $n$-order permutation $\sigma_i \in \Sigma$. Permutation of an $n$-order vector $\vec{v}=\vec{v}[j]$ by $\sigma_i$ is a rearrangement of the vector’s entries according to $\sigma_i$, which is denoted by $\sigma_i \left(\vec{v}\right)$. 
\end{mydef}
To clarify the concept of vector permutation, consider the following example.

\begin{myexam}
\label{EXAM_4SHV}
Consider a $4$-order vector  $\vec{v}_1=\left(\begin{array}{cccc} 1& 2&3&4 \end{array} \right)^T$. A permutation by $\sigma_i=\left(\begin{array}{cccc} 1&2&3&4 \\ 2&3&1&4 \end{array} \right)$ yields $\sigma_i \left(\vec{v}_1\right)=\left(\begin{array}{cccc} 3&1&2&4 \end{array} \right)^T$. The same permutation $\sigma_i$ to a vector $\vec{v}_2=\left(\begin{array}{cccc} 1&-1&-1&1 \end{array} \right)^T$ yields $\sigma_i \left(\vec{v}_2\right)=\left(\begin{array}{cccc} -1&1&-1&1 \end{array} \right)^T$.
\end{myexam}
It is easy to show that an $n$-order unity vector will not be changed by a permutation. We formulate this fact into the following lemma.

\begin{mylemm}
\label{LEMM_invariantPerm}
The $n$-order unity vector ${}^n\vec{1}$ is invariant under $n$-order permutation $\sigma_i$, i.e., $\sigma_i \left( {}^n\vec{1}\right)={}^n\vec{1}$.
\end{mylemm}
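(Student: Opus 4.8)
The plan is to argue directly from the definition of how a permutation acts on a vector's entries, together with the fact that the unity vector is constant. By Definition \ref{DEF_kUnitiyVec} every entry of ${}^n\vec{1}$ equals $+1$, so ${}^n\vec{1}[j]=1$ for all $j \in \{1,\dots,n\}$. By Definition \ref{DEF_permVec} applying $\sigma_i$ merely relocates each entry to a new position, so the entry that lands at position $j$ of $\sigma_i({}^n\vec{1})$ is one of the original entries of ${}^n\vec{1}$ and is therefore again equal to $1$.

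First I would write the $j$-th entry of the permuted vector explicitly as $\sigma_i({}^n\vec{1})[j] = {}^n\vec{1}[\sigma_i^{-1}(j)]$, consistent with the index convention illustrated in Example \ref{EXAM_4SHV}. Since $\sigma_i^{-1}(j) \in \{1,\dots,n\}$, the right-hand side equals $1$. Then I would conclude by comparing entries position by position: for every $j$ we have $\sigma_i({}^n\vec{1})[j] = 1 = {}^n\vec{1}[j]$, and two vectors of the same order that agree in every entry are equal, giving $\sigma_i({}^n\vec{1}) = {}^n\vec{1}$.

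There is no substantive obstacle here: the entire content is that a permutation rearranges positions while leaving the multiset of values untouched, and for the unity vector that multiset is the single value $1$ repeated $n$ times. The only point needing any care is purely clerical, namely fixing once whether $\sigma_i$ pushes the entry at position $p$ to position $\sigma_i(p)$ or pulls the entry from position $\sigma_i(p)$, so that the displayed index identity matches Definition \ref{DEF_permVec}; because the entries are constant, the conclusion holds under either reading.
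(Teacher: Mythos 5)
Your proof is correct and follows essentially the same approach as the paper: both arguments rest on the observation that all entries of ${}^n\vec{1}$ are identical, so any rearrangement of positions leaves the vector unchanged. Your version merely makes the entry-wise index identity and the position-by-position comparison explicit, which the paper leaves informal.
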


\begin{proof}
Since the entries of the $n$-order unity vector ${}^n\vec{1}$ are identical, which are 1s, rearrangement by $n$-order permutation $\sigma_i$ will not change the vector, i.e.,

\begin{equation}
\sigma_i \left( {}^n\vec{1}\right)= \sigma_i \left( \left( \begin{array}{cccc} 1&1&...&1 \end{array} \right)^T \right)= \left( \begin{array}{cccc} 1&1&...&1 \end{array} \right)^T = {}^n\vec{1}
\nonumber
\end{equation}
\end{proof}

Furthermore, applying identical permutation will also not change the value of their inner product. This result will be useful to proof some lemmas or theorems in this paper. Therefore, we formulate it in the following lemma.  

\begin{mylemm}
\label{LEMM_invperm_innerprod}
Let $\sigma_i$ be an $n$-order permutation and $\vec{v}_p, \vec{v}_q $ be $n$-order vectors. Applying $\sigma_i$ identically to both of these vectors does not change their inner product, i.e., 
\begin{equation}
 \left<\sigma_i(\vec{v}_p),\sigma_i(\vec{v}_q)\right> =
 \left<\vec{v}_p,\vec{v_q} \right>
\nonumber
\end{equation}
\end{mylemm}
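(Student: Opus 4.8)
The plan is to expand both sides as explicit coordinate sums and then exploit the fact that a permutation merely relabels the summation index. First I would write the inner product in its defining form, $\left<\vec{v}_p,\vec{v}_q\right> = \Sigma_{j=1}^{n} \vec{v}_p[j]\,\vec{v}_q[j]$, so that the claim reduces to comparing two finite sums of products of scalar entries.

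Next I would make precise how $\sigma_i$ acts on a vector's entries. By Definition \ref{DEF_permVec}, $\sigma_i(\vec{v})$ is the rearrangement of $\vec{v}$ according to $\sigma_i$, so its $j$-th entry is the entry of $\vec{v}$ that $\sigma_i$ carries into position $j$; writing this as $\sigma_i(\vec{v})[j]=\vec{v}[\sigma_i^{-1}(j)]$. The crucial observation is that the \emph{same} permutation is applied to both $\vec{v}_p$ and $\vec{v}_q$, so in the $j$-th term of $\left<\sigma_i(\vec{v}_p),\sigma_i(\vec{v}_q)\right>$ the two factors carry the identical shifted index: $\sigma_i(\vec{v}_p)[j]\,\sigma_i(\vec{v}_q)[j] = \vec{v}_p[\sigma_i^{-1}(j)]\,\vec{v}_q[\sigma_i^{-1}(j)]$.

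I would then reindex the sum. Since $\sigma_i$ is a bijection on $\{1,\dots,n\}$, the map $j \mapsto \sigma_i^{-1}(j)$ permutes the index set, so substituting $\ell = \sigma_i^{-1}(j)$ transforms $\Sigma_{j} \vec{v}_p[\sigma_i^{-1}(j)]\,\vec{v}_q[\sigma_i^{-1}(j)]$ into $\Sigma_{\ell} \vec{v}_p[\ell]\,\vec{v}_q[\ell]$, which is exactly $\left<\vec{v}_p,\vec{v}_q\right>$, completing the argument.

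There is essentially no hard obstacle here; the proof is routine once the bookkeeping is in place. The only point demanding genuine care is the indexing convention for how $\sigma_i$ acts on entries, since an inconsistent convention between the two vectors would destroy the matching of factors that makes the reindexing work. An alternative, convention-free route would be to realize the permutation by a permutation matrix $\vec{P}$ satisfying $\vec{P}^T\vec{P}=\vec{I}$, whence $\left<\vec{P}\vec{v}_p,\vec{P}\vec{v}_q\right> = \vec{v}_p^T\vec{P}^T\vec{P}\,\vec{v}_q = \vec{v}_p^T\vec{v}_q = \left<\vec{v}_p,\vec{v}_q\right>$; this sidesteps index tracking entirely, at the modest cost of introducing the permutation-matrix formalism the paper has not yet developed.
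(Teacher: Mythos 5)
Your proposal is correct and follows essentially the same route as the paper's own proof: both expand the inner product as a coordinate sum of the entrywise products and observe that applying the same permutation to both vectors merely relabels the summation index, so bijectivity of $\sigma_i$ and commutativity of addition give the result. Your version is in fact slightly more careful about the indexing convention (writing $\sigma_i(\vec{v})[j]=\vec{v}[\sigma_i^{-1}(j)]$ explicitly), and the permutation-matrix remark is a nice convention-free alternative, but neither changes the substance of the argument.
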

This lemma is a direct consequence of commutative property of the addition, i.e., rearrangement by $\sigma_i$ does not change the results of the terms in addition of $\vec{v}_p[j]\vec{v_q}[j]$ in the calculation of the inner products.

\begin{proof}
Consider an $n$-order vector $\vec{w}$ whose entries are the product of entry-wise multiplication of $\vec{v}_p$  and $\vec{v}_q$, i.e., $\vec{w}[j]=\vec{v}_p[j] \vec{v}_q[j]$. Then,

\begin{equation}
\left<\vec{v}_p,\vec{v_q} \right>=\Sigma_{j=1}^n \vec{v}_p[j] \vec{v}_q[j] =\Sigma_{j=1}^n \vec{w}[j] 
\nonumber
\end{equation} 
Since the permutation $\sigma_i$ changes the indices from $j$ to a new one, let say $r$, then

\begin{equation}
 \left<\sigma_i(\vec{v}_p),\sigma_i(\vec{v}_q)\right> =
\Sigma_{j=1}^n \sigma_i(\vec{v}_p[j)]\sigma_i(\vec{v_q}[j])= \Sigma_{r=1}^n \vec{v_p}[r] \vec{v_q}[r]
\nonumber
\end{equation}

\begin{equation}
 =\Sigma_{r=1}^n \vec{w}[r]=\Sigma_{j=1}^n \vec{w}[j]=\left<\vec{v}_p,\vec{v_q} \right>.
\nonumber
\end{equation}

\end{proof}

Now, we are ready to construct various kinds of H-vectors. Let's start with HSH (Half-length Seminormalized Hadamard) vectors, which are any $2k$ length vector with balanced number of 1 and -1.

\begin{mydef}
\label{DEF_HSH}
A $2k$-order HSH (Half-length Seminormalized Hadamard) vector $\vec{f}_i \in F$, where $k$ is a positive integer, is a vector which is obtained by $2k$-order permutation $\sigma_i \in \Sigma$ of the OHH vector $\vec{s}$, i.e., it is the vector given by $\vec{f}_j=\sigma_i \left( \vec{s} \right)$.
\end{mydef}

Although $\sigma_i$ is bijective, $2k$ number of entries in $\vec{s}$ consist of only two kinds of objects, i.e., $\{-1,1\}$. Accordingly, different permutations might yields identical HSH vectors, which implies that the cardinality of $F$ cannot be directly determined by the permutation, due to the redundancy of the entries; i.e. $|F|<(2k)!$. The following lemma gives the correct number of $2k$-order HSH vectors.

\begin{mylemm}
\label{LEMM_Num_HSH}
There are $N_F \equiv |F|=C(2k,k)$ number of distinct HSH vectors of order $2k$.
\end{mylemm}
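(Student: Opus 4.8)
The plan is to count the number of distinct HSH vectors directly by a combinatorial argument, rather than through the permutation group, since Definition~\ref{DEF_HSH} already warns us that distinct permutations $\sigma_i$ can produce the same vector. The key observation is that an HSH vector $\vec{f}_i = \sigma_i(\vec{s})$ is completely and uniquely determined by the \emph{positions} of its entries, not by the permutation used to reach it. Since $\vec{s}$ has exactly $k$ entries equal to $-1$ and $k$ entries equal to $+1$ (by Definition~\ref{DEF_OHH_vector}), any permutation merely rearranges these into a length-$2k$ sequence that still has exactly $k$ minus-ones and $k$ plus-ones. Conversely, every length-$2k$ vector with exactly $k$ entries $-1$ and $k$ entries $+1$ can be obtained as some permutation of $\vec{s}$.

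First I would establish this bijection precisely: the set $F$ of HSH vectors is exactly the set of all $\{+1,-1\}$-vectors of length $2k$ having exactly $k$ entries equal to $-1$. The forward inclusion follows because permutation preserves the multiset of entries (each of $\pm 1$ appears $k$ times in $\vec{s}$), and the reverse inclusion follows because any two vectors with the same multiset of entries differ by a permutation of positions. Second, I would count this set by the standard combinatorial principle: a vector in $F$ is determined by choosing which $k$ of the $2k$ coordinate positions carry the entry $-1$ (the remaining $k$ positions are then forced to be $+1$). The number of such choices is precisely the binomial coefficient $C(2k,k)$, which yields $N_F = |F| = C(2k,k)$ as claimed.

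The main obstacle, though it is more a matter of rigor than of difficulty, is to argue cleanly that distinct position-selections give distinct vectors and that every HSH vector corresponds to exactly one such selection, so that the counting is neither an over- nor an under-count. This is where the earlier remark that $|F| < (2k)!$ becomes relevant: the map $\sigma_i \mapsto \sigma_i(\vec{s})$ from $\Sigma$ to $F$ is surjective but highly non-injective, with each vector in $F$ having a preimage of size $k!\,k!$ (the permutations that permute the $-1$ positions among themselves and the $+1$ positions among themselves). I would note in passing that this is consistent with the count, since $(2k)!/(k!\,k!) = C(2k,k)$, though the direct position-counting argument makes the group-theoretic bookkeeping unnecessary. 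I expect the proof to be short once the set-equality characterization of $F$ is stated, reducing the whole claim to the elementary fact that the number of length-$2k$ binary strings with exactly $k$ ones is $\binom{2k}{k}$.
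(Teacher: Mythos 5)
Your proof is correct and follows essentially the same counting argument as the paper: an HSH vector is determined by choosing which $k$ of the $2k$ positions carry $-1$, giving $C(2k,k)$. The extra care you take in verifying that $F$ coincides with the set of all balanced $\pm1$ vectors (and the $k!\,k!$ preimage remark) is a welcome tightening of the paper's terser argument, but it is the same route.
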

To proof the lemma, we employ a simple counting argument, i.e., the combination of $k$ objects from the set of $2k$ objects.

\begin{proof}
In a $2k$-order HSH vector, there are $k$-number of +1 and $k$ number of -1. Consider them as two different kinds of objects, which will be arranged into $2k$ places. By the counting principle, there are $C(2k,k)$ ways to arrange the first kind of the objects, i.e. $1$, into $2k$ places and fill the rest with the second ones, i.e., $-1$.
\end{proof}

Since the HSH has a balanced number of 1 and -1, we can show that every HSH vector is orthogonal to the $2k$-length unity vector. We formulate this fact into the following lemma.

\begin{mylemm}
\label{LEMM_HSH_ortho}
Every $2k$-order HSH vectors $\vec{f}_i \in F$ is orthogonal to the $2k$ length unity vector ${}^{2k} \vec{1}$, i.e., $\left< \vec{f}_j, {}^{2k}\vec{1}\right>=0$.
\end{mylemm}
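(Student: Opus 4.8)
The plan is to reduce the orthogonality claim to the already-established orthogonality of the OHH vector $\vec{s}$, using the permutation-invariance of inner products. By Definition \ref{DEF_HSH}, every HSH vector $\vec{f}_j$ is obtained from the OHH vector $\vec{s}$ by some $2k$-order permutation $\sigma_i \in \Sigma$, i.e., $\vec{f}_j = \sigma_i(\vec{s})$. The goal is therefore to compute $\left< \vec{f}_j, {}^{2k}\vec{1}\right> = \left< \sigma_i(\vec{s}), {}^{2k}\vec{1}\right>$ and show it vanishes.

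First I would rewrite the unity vector appearing in the inner product by invoking Lemma \ref{LEMM_invariantPerm}, which states that ${}^{2k}\vec{1}$ is invariant under any permutation, so that ${}^{2k}\vec{1} = \sigma_i\left({}^{2k}\vec{1}\right)$. This lets me express both arguments of the inner product as images of the same permutation $\sigma_i$:
\begin{equation}
\left< \vec{f}_j, {}^{2k}\vec{1}\right> = \left< \sigma_i(\vec{s}), \sigma_i\left({}^{2k}\vec{1}\right)\right>. \nonumber
\end{equation}
Next I would apply Lemma \ref{LEMM_invperm_innerprod}, which guarantees that applying the same permutation to both vectors leaves their inner product unchanged, yielding
\begin{equation}
\left< \sigma_i(\vec{s}), \sigma_i\left({}^{2k}\vec{1}\right)\right> = \left< \vec{s}, {}^{2k}\vec{1}\right>. \nonumber
\end{equation}
Finally, Lemma \ref{LEMM_OHH_ortho} gives $\left< \vec{s}, {}^{2k}\vec{1}\right> = 0$, completing the chain and establishing the claim.

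There is essentially no hard part here: the entire argument is a three-step composition of lemmas already proved in the excerpt (invariance of the unity vector, invariance of the inner product under a common permutation, and orthogonality of $\vec{s}$), so the only thing requiring care is the bookkeeping of which permutation $\sigma_i$ witnesses the representation $\vec{f}_j = \sigma_i(\vec{s})$ and ensuring the same $\sigma_i$ is applied to the unity vector. An alternative, self-contained route would bypass the lemmas entirely by observing that permutation merely rearranges entries, so any HSH vector still contains exactly $k$ entries equal to $+1$ and $k$ entries equal to $-1$; the inner product with ${}^{2k}\vec{1}$ is then simply the sum of all entries, which is $k(+1) + k(-1) = 0$. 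I would favor the lemma-chaining proof since it reuses the machinery the paper has deliberately set up, but the direct counting argument is the conceptually cleanest and could be mentioned as a remark.
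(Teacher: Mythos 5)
Your proof is correct and follows exactly the paper's own argument: rewrite ${}^{2k}\vec{1}$ as $\sigma_i({}^{2k}\vec{1})$ via Lemma \ref{LEMM_invariantPerm}, cancel the common permutation via Lemma \ref{LEMM_invperm_innerprod}, and conclude with Lemma \ref{LEMM_OHH_ortho}. The direct counting argument you mention as an alternative is also valid, but the lemma-chaining route is precisely what the paper does.
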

We employ previous results to proof the lemma.

\begin{proof}
Consider an HSH vector $\vec{f}_i \in F$. Based on Definition \ref{DEF_HSH} and by employing Lemmas \ref{LEMM_invariantPerm}, \ref{LEMM_invperm_innerprod}, and \ref{LEMM_OHH_ortho} subsequently, we obtain

\begin{equation}
\left< \vec{f}_j,{}^{2k}\vec{1} \right> = 
\left< \sigma_i(\vec{s}), {}^{2k}\vec{1} \right> = 
\left< \sigma_i(\vec{s}), \sigma_i({}^{2k}\vec{1}) \right>
=\left< \vec{s}, {}^{2k}\vec{1} \right>=0.
\nonumber
\end{equation}
\end{proof}

It is trivial to show that $\left< \vec{f}_j,-{}^{2k}\vec{1} \right> = 0$ is also hold. Moreover, any permuted HSH vector is also orthogonal to the $2k$-order unity vector, which is formulated in the following lemma.

\begin{mylemm}
\label{LEMM_permHSH_ortho}
Permuted $2k$-order HSH vectors are orthogonal to the $2k$ order unity vector, i.e., $ \left< \sigma_i(\vec{f}_j),{}^{2k} \vec{1}\right> = 0$.
\end{mylemm}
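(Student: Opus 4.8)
The plan is to reduce this statement directly to Lemma~\ref{LEMM_HSH_ortho}, reusing verbatim the mechanism of that lemma's proof. The governing observation is that the unity vector ${}^{2k}\vec{1}$ is a fixed point of \emph{every} $2k$-order permutation, so a permutation applied to one argument of the inner product can always be mirrored into the other argument for free, after which it can be cancelled entirely. First I would rewrite the second argument ${}^{2k}\vec{1}$ as $\sigma_i\!\left({}^{2k}\vec{1}\right)$, which is legitimate by Lemma~\ref{LEMM_invariantPerm}. This costs nothing but now exhibits $\sigma_i$ applied identically to both slots of the inner product.

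Next I would invoke Lemma~\ref{LEMM_invperm_innerprod} to peel $\sigma_i$ off both arguments simultaneously, reducing the expression to $\left<\vec{f}_j,{}^{2k}\vec{1}\right>$. Finally, Lemma~\ref{LEMM_HSH_ortho} asserts precisely that this inner product vanishes, since $\vec{f}_j$ is an HSH vector. Chaining the three equalities yields
\begin{equation}
\left<\sigma_i(\vec{f}_j),{}^{2k}\vec{1}\right>
= \left<\sigma_i(\vec{f}_j),\sigma_i({}^{2k}\vec{1})\right>
= \left<\vec{f}_j,{}^{2k}\vec{1}\right>
= 0,
\nonumber
\end{equation}
which is the claim.

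I do not anticipate any genuine obstacle here; the result follows mechanically once the three prior lemmas are in hand, and the proof is essentially a one-line computation. The only point worth flagging is conceptual rather than technical: because a permutation of a permutation of $\vec{s}$ is again a single permutation of $\vec{s}$, the composed vector $\sigma_i(\vec{f}_j)$ is itself an HSH vector with $k$ entries equal to $+1$ and $k$ entries equal to $-1$. Consequently this lemma can equivalently be regarded as the special case of Lemma~\ref{LEMM_HSH_ortho} applied to $\sigma_i(\vec{f}_j)$ in place of $\vec{f}_j$, and either framing produces the same vanishing inner product.
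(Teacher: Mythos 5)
Your argument is correct and is essentially identical to the paper's own proof: both rewrite ${}^{2k}\vec{1}$ as $\sigma_i({}^{2k}\vec{1})$ via Lemma~\ref{LEMM_invariantPerm}, cancel the common permutation with Lemma~\ref{LEMM_invperm_innerprod}, and conclude with Lemma~\ref{LEMM_HSH_ortho}. Your closing remark that $\sigma_i(\vec{f}_j)$ is itself an HSH vector is a valid alternative framing, but the main chain of equalities matches the paper exactly.
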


\begin{proof}
Consider a $2k$-order HSH vector $\vec{f}_j \in F$, a $2k$-order permutation $\sigma_i \in \Sigma$, and the $2k$ order unity vector ${}^{2k}\vec{1}$. Then, by employing Lemmas \ref{LEMM_invariantPerm}, \ref{LEMM_invperm_innerprod}, and \ref{LEMM_HSH_ortho} subsequently, we obtain
\begin{equation}
\left<\sigma_i(\vec{f}_j), {}^{2k} \vec{1} \right> = \left<\sigma_i(\vec{f}_j), \sigma_i({}^{2k} \vec{1}) \right>  =
 \left<\vec{f}_j,{}^{2k} \vec{1} \right>=0 
 \nonumber
\end{equation}

\end{proof}

In the next stage, we construct various H-vectors of length $4k$. We start with the definition of a $4k$ length unity vector ${}^{4k}\vec{1}$ and an ordered seminormalized Hadamard vector as follows.

\begin{mydef}
\label{DEF_4kunity}
A $4k$-order unity vector $\vec{o} \equiv {}^{4k}\vec{1}$, with $k$ a positive integer, is a vector of length $4k$ whose all entries are 1, i.e., it is a vector of the following form

\begin{equation}
 \vec{o} \equiv {}^{4k}\vec{1} = \left(
\underbrace{ 
 \begin{array}{cccc}
  + & + & \cdots & +
 \end{array} }_{4k-number} \right)^T
\end{equation}

\end{mydef}

\begin{mydef}
\label{DEF_OSH}
An OSH (Ordered Seminormalized Hadamard) vector $\vec{t}$ is a $4k$-length vector of the following form
\begin{equation}
\vec{t} \equiv \left( 
\begin{array}{cccc}
  -{}^k\vec{1}^T & -{}^k\vec{1}^T & {}^k\vec{1}^T & {}^k\vec{1}^T \end{array}
\right)^T
\end{equation}
where $k$ is a positive integer.
\end{mydef}

Since the number of 1 and -1 in the OSH vector is balanced, it will eventually orthogonal to the $4k$-order unity vector $\vec{o}$. We formulate this fact into the following lemma.

\begin{mylemm}
\label{LEMM_OSH_ortho4kunity}
The $4k$-order OSH vector is orthogonal to $4k$-order unity vector $\vec{o}$, i.e., $\left<\vec{t}, \vec{o} \right>=0$.
\end{mylemm}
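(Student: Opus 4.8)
The plan is to follow the same template as the proof of Lemma~\ref{LEMM_OHH_ortho}. Since every entry of $\vec{o}={}^{4k}\vec{1}$ equals $1$, the inner product collapses to the plain sum of the coordinates of $\vec{t}$. First I would expand
$$\left< \vec{t},\vec{o} \right> = \sum_{j=1}^{4k}\vec{t}[j]\cdot\vec{o}[j] = \sum_{j=1}^{4k}\vec{t}[j]\cdot 1 = \sum_{j=1}^{4k}\vec{t}[j],$$
so that the orthogonality claim reduces to verifying that the entries of $\vec{t}$ sum to zero.

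Next I would read the entries off Definition~\ref{DEF_OSH}. The two leading blocks $-{}^k\vec{1}$ supply $2k$ entries equal to $-1$, and the two trailing blocks ${}^k\vec{1}$ supply $2k$ entries equal to $+1$, whence the sum is $-k-k+k+k=0$. This finishes the argument.

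There is no real obstacle here; the only care required is the bookkeeping of the four length-$k$ blocks, and the result is in fact Lemma~\ref{LEMM_OHH_ortho} in disguise. Indeed, concatenating the two $-{}^k\vec{1}$ blocks gives $-{}^{2k}\vec{1}$ and concatenating the two ${}^k\vec{1}$ blocks gives ${}^{2k}\vec{1}$, so $\vec{t}=\left(\begin{array}{cc} -{}^{2k}\vec{1}^T & {}^{2k}\vec{1}^T\end{array}\right)^T$ is precisely a $4k$-order OHH vector in the sense of Definition~\ref{DEF_OHH_vector}, obtained by taking the half-length parameter to be $2k$ instead of $k$. Applying Lemma~\ref{LEMM_OHH_ortho} with $2k$ in place of $k$ then gives $\left< \vec{t},\vec{o} \right>=0$ immediately, because $\vec{o}={}^{4k}\vec{1}$ is exactly the unity vector of the matching order. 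Either route completes the proof.
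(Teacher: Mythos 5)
Your proof is correct and follows essentially the same route as the paper: expand the inner product against the all-ones vector and observe that the $2k$ entries equal to $-1$ cancel the $2k$ entries equal to $+1$. The additional remark that $\vec{t}$ is an OHH vector with parameter $2k$, so Lemma~\ref{LEMM_OHH_ortho} applies directly, is a nice observation but not needed.
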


\begin{proof}
By Definition \ref{DEF_OSH} and expansion of the inner product, we obtain

\begin{equation}
\left<\vec{t}, \vec{o} \right> = \Sigma_{j=1}^{4k} \vec{t}[j] \cdot {}^{4k}\vec{1}[j] =  \Sigma_{j=1}^{4k} \vec{t}[j] \cdot 1 = \Sigma_{j=1}^{4k} \vec{t}[j] 
\nonumber
\end{equation}

\begin{equation}
= \underbrace{ 
   -1-1 \cdots -1}_{2k-number}+ \underbrace{ 
   1+1 \cdots +1}_{2k-number} = 0
\nonumber
\end{equation}

\end{proof}

Finally, we can define the Seminormalized Hadamard vector similarly to the HSH vector as follows.

\begin{mydef}
\label{DEF_SHvector}
A $4k$-order SH (Seminormalized Hadamard) vector $\vec{v}_j \in V$ is a vector that is obtained by $4k$-order permutation $\sigma_i \in \Sigma$ of the $4k$-order OSH vector $\vec{t}$, i.e, $\vec{v}_j =\sigma_i(\vec{t})$.
\end{mydef}

Since the identity permutation $\sigma_0 \in \Sigma$, by Definition \ref{DEF_SHvector}, it is trivial that the OSH vectors $\vec{t}$ is also an SH vectors. Additionally, since the permuted objects are redundant -1 and 1, different permutation may yields identical SH vectors, therefore $|V|<(4k)! $. The correct number of the $4k$ order SH vector is given by the following lemma.

\begin{mylemm}
\label{LEMM_SH_number}
There are $N_V=C(4k,2k)$ number of distinct SH vectors of order $4k$.
\end{mylemm}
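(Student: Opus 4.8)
The plan is to give a direct counting argument, entirely parallel to the proof of Lemma \ref{LEMM_Num_HSH} but with the parameters doubled. First I would establish the structural fact that every SH vector has exactly $2k$ entries equal to $+1$ and exactly $2k$ entries equal to $-1$. This follows immediately from Definition \ref{DEF_SHvector}: an SH vector is $\vec{v}_j=\sigma_i(\vec{t})$ for some $4k$-order permutation $\sigma_i$, and by Definition \ref{DEF_OSH} the OSH vector $\vec{t}$ already has $2k$ entries of $-1$ (from the two $-{}^k\vec{1}$ blocks) and $2k$ entries of $+1$ (from the two ${}^k\vec{1}$ blocks). Since a permutation merely rearranges entries, it preserves this multiset composition, so every $\vec{v}_j$ inherits the balanced count.

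Next I would argue the converse, namely that \emph{every} $4k$-length $\pm1$ vector with $2k$ entries of $+1$ and $2k$ of $-1$ arises as some $\sigma_i(\vec{t})$. This is where one must be slightly careful, because the map $\sigma_i\mapsto\sigma_i(\vec{t})$ is not injective. The point is that the symmetric group on $4k$ symbols acts transitively on the set of positions, so for any target arrangement one can choose a permutation carrying the $2k$ positions of $-1$ in $\vec{t}$ onto the desired $2k$ positions of $-1$; hence the set $V$ of SH vectors is exactly the set of all balanced $\pm1$ vectors of length $4k$. This identification removes the permutation redundancy noted in the remark that $|V|<(4k)!$.

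Finally I would invoke the elementary counting principle: to specify such a vector it suffices to choose which $2k$ of the $4k$ positions carry $+1$, and the remaining $2k$ positions are then forced to carry $-1$. There are $C(4k,2k)$ such choices, whence $N_V=|V|=C(4k,2k)$.

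I do not anticipate a genuine obstacle here, as the result is a routine binomial count; the only subtlety worth stating explicitly is the transitivity argument guaranteeing surjectivity of $\sigma_i\mapsto\sigma_i(\vec{t})$ onto the balanced vectors, which justifies counting arrangements rather than permutations and thereby resolves the redundancy. One could alternatively phrase the entire argument as an immediate corollary of Lemma \ref{LEMM_Num_HSH} by the substitution $2k\mapsto 4k$ and $k\mapsto 2k$, since the combinatorial situation is identical.
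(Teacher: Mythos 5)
Your proposal is correct and follows essentially the same route as the paper, which likewise counts the number of ways to place $2k$ entries equal to $1$ among $4k$ positions and fill the rest with $-1$, giving $C(4k,2k)$. The only difference is that you explicitly verify both that permutations preserve the balanced count and that every balanced $\pm1$ vector is reachable from $\vec{t}$ by some permutation (transitivity of the symmetric group), steps the paper's proof leaves implicit; this makes your argument slightly more complete but not different in kind.
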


\begin{proof}
Consider again two kinds of objects, i.e., $2k$ number of 1, and $2k$ number of -1, which are arranged in $4k$ places. By counting principle, there are $C(4k,2k)$ distinct ways to arrange $2k$  number of the first object (1) into $4k$ places and fill the remaining with the second one (-1).
\end{proof}

We also have orthogonality property of SH vector to the $4k$ order unity vector, given by the folloing lemma.

\begin{mylemm}
\label{LEMM_SH_ortho}
The $4k$ order SH vectors $\vec{v}_j \in V$ are orthogonal to $4k$-order unity vector $\vec{o}$.
\end{mylemm}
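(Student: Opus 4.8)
The plan is to imitate the proofs of Lemmas \ref{LEMM_HSH_ortho} and \ref{LEMM_permHSH_ortho}, reducing the statement to the already-established orthogonality of the \emph{ordered} vector $\vec{t}$ by exploiting the permutation-invariance of the unity vector. By Definition \ref{DEF_SHvector}, an arbitrary SH vector has the form $\vec{v}_j = \sigma_i(\vec{t})$ for some $4k$-order permutation $\sigma_i \in \Sigma$ and the OSH vector $\vec{t}$, so it suffices to show that $\left< \sigma_i(\vec{t}), \vec{o} \right> = 0$ for every $\sigma_i$.

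First I would recall that $\vec{o} \equiv {}^{4k}\vec{1}$ by Definition \ref{DEF_4kunity}, and that by Lemma \ref{LEMM_invariantPerm} the unity vector is invariant under any $4k$-order permutation, so $\vec{o} = \sigma_i(\vec{o})$. This lets me rewrite $\vec{o}$ in the second slot of the inner product as $\sigma_i(\vec{o})$, producing a pair of vectors to which the \emph{same} permutation $\sigma_i$ has been applied. Next, I would invoke Lemma \ref{LEMM_invperm_innerprod}, which states that applying an identical permutation to both arguments leaves the inner product unchanged, to strip the permutation off both factors simultaneously. Finally, Lemma \ref{LEMM_OSH_ortho4kunity} supplies $\left< \vec{t}, \vec{o} \right> = 0$, closing the chain. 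Concretely, the argument collapses to the single line
\begin{equation}
\left< \vec{v}_j, \vec{o} \right> = \left< \sigma_i(\vec{t}), \vec{o} \right> = \left< \sigma_i(\vec{t}), \sigma_i(\vec{o}) \right> = \left< \vec{t}, \vec{o} \right> = 0 .
\nonumber
\end{equation}

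Since every step is a direct citation of a prior lemma, I do not anticipate any genuine obstacle; the only point requiring a moment of care is the very first substitution, namely recognizing that $\vec{o}$ may be freely rewritten as $\sigma_i(\vec{o})$ so that Lemma \ref{LEMM_invperm_innerprod} becomes applicable. An alternative, equally short route would bypass $\vec{t}$ entirely: an SH vector carries by construction $2k$ entries equal to $+1$ and $2k$ equal to $-1$, so $\left< \vec{v}_j, \vec{o} \right>$ is merely the sum of its entries, which is $(-2k)+(2k)=0$. I would present the permutation-based proof for consistency with the surrounding lemmas, and could mention the counting argument only as a remark.
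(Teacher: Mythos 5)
Your proof is correct and follows exactly the same route as the paper: rewrite $\vec{o}$ as $\sigma_i(\vec{o})$ via Lemma \ref{LEMM_invariantPerm}, strip the common permutation via Lemma \ref{LEMM_invperm_innerprod}, and conclude with Lemma \ref{LEMM_OSH_ortho4kunity}. In fact your version is slightly cleaner, since the paper's displayed chain contains a typo ($\sigma_i(\vec{s})$ where $\sigma_i(\vec{t})$ is meant) and omits the explicit citation of Lemma \ref{LEMM_invperm_innerprod}, both of which you fix.
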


\begin{proof}
Consider an SH vector $\vec{v}_j \in V$. Then, by Definition \ref{DEF_SHvector}, and applying Lemmas \ref{LEMM_invariantPerm} and \ref{LEMM_OSH_ortho4kunity} subsequently, we get

\begin{equation}
\left< \vec{v}_j,\vec{o} \right> = \left< \sigma_i(\vec{t}),\vec{o} \right> = \left< \sigma_i(\vec{s}), \sigma_i(\vec{o}) \right> = \left< \vec{t}, \vec{o} \right>=0
\nonumber
\end{equation}

\end{proof}

Lemma \ref{LEMM_HSH_ortho} states that an HSH vector $\vec{f}_j$ is orthogonal to $2k$ order unity vector ${}^{2k}\vec{1}$; and accordingly to the negated $2k$-order unity vector $-{}^{2k}\vec{1}$ as well. Consider two HSH vectors $\vec{f}_1$ and $\vec{f}_2$. The calculation of the inner product of concatenated two HSH vectors $\left( \begin{array}{cc}
\vec{f}_1 & \vec{f}_2
\end{array} \right)^T$
with the OSH vector can be conducted block-wise. By applying orthogonal property of HSH vector to $2k$ order unity vector, we obtain

\begin{equation}
\left< \left( \begin{array}{cc} \vec{f}_1^T & \vec{f}_2^T
\end{array} \right)^T, \left( \begin{array}{cc} -{}^{2k}\vec{1}^T & {}^{2k}\vec{1}^T
\end{array} \right)^T \right> = \left<\vec{f}_1,-{}^{2k}\vec{1}\right> + \left<\vec{f}_1,{}^{2k}\vec{1}\right> 
\nonumber
\end{equation}
\begin{equation}
 = 0+0 = 0
\nonumber
\end{equation}
But, the vector $\left( \begin{array}{cc} -{}^{2k}\vec{1}^T & {}^{2k}\vec{1}^T
\end{array} \right)^T = \left( \begin{array}{cccc} -{}^{k}\vec{1}^T&-{}^{k}\vec{1}^T&{}^{k}\vec{1}^T&{}^{k}\vec{1}^T 
\end{array} \right)^T $ is in fact the OSH vector $\vec{t}$. Accordingly, two concatenated HSH vectors are orthogonal to an OSH vector, i.e.,

\begin{equation}
\left< \left( \begin{array}{cc} \vec{f}_1^T & \vec{f}_2^T
\end{array} \right)^T, \left( \begin{array}{cc} -{}^{2k}\vec{1}^T&{}^{2k}\vec{1}^T
\end{array} \right)^T \right> = 
\left< \left( \begin{array}{cc} \vec{f}_1^T & \vec{f}_2^T
\end{array} \right)^T,\vec{t}\right> = 0
\nonumber
\end{equation}
Furthermore, concatenation of permuted $\vec{f}_1$ by $\sigma_p$ and $\vec{f}_2$ by $\sigma_q$ yields vectors that are orthogonal to the OSH vector, since by block vector operation and Lemma \ref{LEMM_permHSH_ortho}, we can express

\begin{equation}
\left< \left( \begin{array}{cc} \sigma_p(\vec{f}_1)^T & \sigma_q(\vec{f}_2)^T \end{array} \right)^T, \left( \begin{array}{cc} -{}^{2k}\vec{1}^T&{}^{2k}\vec{1}^T
\end{array} \right)^T \right>=0
\nonumber
\end{equation}
which motivate us to define the PSH (Partitioned SH) vectors as follows.

\begin{mydef}
\label{DEF_PSH}
Let $\sigma_p(\vec{f}_1)$ and $\sigma_p(\vec{f}_2)$ be permuted HSH vectors. A $4k$-order vector $\vec{w}_i \in W$ constructed by concatenation of these two vectors, $\vec{w}_i=\left( \begin{array}{cc} \sigma_p(\vec{f}_1)^T &\sigma_p(\vec{f}_2)^T \end{array} \right)^T $ is called a partitioned SH (PSH) vector.
\end{mydef}

In a PSH vector, the number of -1 and 1 is maintained to be in balanced at each of the block/ partition, i.e., each of half part of the $4k$-length vector that consists of $2k$ entries in the left, and another $2k$ entries in right parts. This balancing makes PSH vectors orthogonal to the OSH vector, which is formulated in the following lemma.

\begin{mylemm}
\label{LEMM_PSH_ortho}
All of the $4k$-order PSH vectors $\vec{w}_i \in W$ are orthogonal to the $4k$-order OSH vector $\vec{t}$, i.e., $\left< \vec{w}_i, \vec{t} \right>=0$.
\end{mylemm}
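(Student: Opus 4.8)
The plan is to exploit the block structure of both vectors and reduce the claim to the already-established orthogonality of a permuted HSH vector against a $2k$-order unity vector (Lemma \ref{LEMM_permHSH_ortho}). First I would regroup the four $k$-blocks of the OSH vector $\vec{t}$ from Definition \ref{DEF_OSH} into two $2k$-blocks, namely $\vec{t} = \left( \begin{array}{cc} -{}^{2k}\vec{1}^T & {}^{2k}\vec{1}^T \end{array} \right)^T$, which is valid because concatenating $-{}^k\vec{1}$ with $-{}^k\vec{1}$ gives $-{}^{2k}\vec{1}$ and concatenating ${}^k\vec{1}$ with ${}^k\vec{1}$ gives ${}^{2k}\vec{1}$. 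Simultaneously, by Definition \ref{DEF_PSH} the PSH vector $\vec{w}_i$ is already presented as the concatenation of two $2k$-length permuted HSH blocks, $\sigma_p(\vec{f}_1)$ and $\sigma_q(\vec{f}_2)$.

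Next I would compute the inner product block-wise. Since both $\vec{w}_i$ and $\vec{t}$ split into an upper and a lower $2k$-length block whose index ranges coincide, the full inner product separates additively:
\begin{equation}
\left< \vec{w}_i, \vec{t} \right> = \left< \sigma_p(\vec{f}_1), -{}^{2k}\vec{1} \right> + \left< \sigma_q(\vec{f}_2), {}^{2k}\vec{1} \right>.
\nonumber
\end{equation}
The second summand vanishes immediately by Lemma \ref{LEMM_permHSH_ortho}. For the first summand I would factor the scalar $-1$ out of the unity vector, writing $\left< \sigma_p(\vec{f}_1), -{}^{2k}\vec{1} \right> = -\left< \sigma_p(\vec{f}_1), {}^{2k}\vec{1} \right>$, which is again zero by the same lemma. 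Adding the two vanishing terms yields $\left< \vec{w}_i, \vec{t} \right> = 0$, as required.

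There is no substantial obstacle here; the argument is essentially the block-wise computation already sketched in the paragraph motivating Definition \ref{DEF_PSH}. The only point that deserves care is the compatibility of the two block decompositions: one must check that the first $2k$ coordinates of $\vec{t}$ genuinely form $-{}^{2k}\vec{1}$ and align exactly with the $\sigma_p(\vec{f}_1)$ block of $\vec{w}_i$ (and likewise for the lower blocks), so that the additive splitting of the inner product is legitimate. Once this alignment is fixed, each block is a permuted HSH vector paired with a (possibly negated) $2k$-order unity vector, and a single application of Lemma \ref{LEMM_permHSH_ortho} to each block closes the argument.
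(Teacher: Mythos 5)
Your proof is correct and follows essentially the same route as the paper: regroup $\vec{t}$ into the two $2k$-blocks $-{}^{2k}\vec{1}$ and ${}^{2k}\vec{1}$, split the inner product block-wise against $\sigma_p(\vec{f}_1)$ and $\sigma_q(\vec{f}_2)$, and kill each summand by the orthogonality of permuted HSH vectors to the unity vector. The only cosmetic difference is that you invoke Lemma \ref{LEMM_permHSH_ortho} directly (factoring out the sign on $-{}^{2k}\vec{1}$), whereas the paper re-runs the underlying permutation-invariance argument via Lemmas \ref{LEMM_invariantPerm}, \ref{LEMM_invperm_innerprod}, and \ref{LEMM_HSH_ortho}; the content is identical.
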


\begin{proof}
Based on Definition \ref{DEF_PSH}, property of block vector operation, and Lemmas \ref{LEMM_invariantPerm},  \ref{LEMM_invperm_innerprod}, and \ref{LEMM_HSH_ortho} subsequently, we obtain the following results

\begin{equation}
\left< \vec{w}_i,\vec{t} \right> =
\left< \left( \begin{array}{cc} \sigma_p(\vec{f}_1)^T & \sigma_q(\vec{f}_2)^T \end{array} \right)^T, \left( \begin{array}{cc} -{}^{2k}\vec{1}^T&{}^{2k}\vec{1}^T
\end{array} \right)^T \right>
\nonumber
\end{equation}
% =
\begin{equation}
=\left<\sigma_p(\vec{f}_1), -{}^{2k}\vec{1}\right>+\left<\sigma_q(\vec{f}_2),{}^{2k}\vec{1}\right>
\nonumber
\end{equation}
% =
\begin{equation}
=\left<\sigma_p(\vec{f}_1), \sigma_p(-{}^{2k}\vec{1})\right>+\left<\sigma_q(\vec{f}_2),\sigma_q({}^{2k}\vec{1}) \right>
\nonumber
\end{equation}

\begin{equation}
=\left<\vec{f}_1,-{}^{2k}\vec{1}\right> +\left< \vec{f}_2,{}^{2k}\vec{1} \right> = 0 + 0 = 0
\nonumber
\end{equation}

\end{proof}

The number of $4k$-order PSH vector is useful in our forthcoming calculations. We formulate the number in the following lemma.

\begin{mylemm}
\label{LEMM_Num_PSH}
The cardinality of the set of all $4k$-order PSH vector is $|W|\equiv N_W=C(2k,k)^2$.
\end{mylemm}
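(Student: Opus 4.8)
The plan is to exploit the block structure encoded in Definition \ref{DEF_PSH}: a PSH vector $\vec{w}_i$ is the concatenation of a left block $\sigma_p(\vec{f}_1)$ and a right block $\sigma_q(\vec{f}_2)$, each of length $2k$, and each block is itself an HSH vector. So first I would establish that, as the permutation varies, each block ranges over exactly the full set $F$ of HSH vectors. Since $\vec{f}_1 \in F$ is by Definition \ref{DEF_HSH} a permutation $\sigma(\vec{s})$ of the OHH vector $\vec{s}$, the composite $\sigma_p(\vec{f}_1) = (\sigma_p \circ \sigma)(\vec{s})$ is again a permutation of $\vec{s}$, hence lies in $F$; conversely, every element of $F$ is of this form for a suitable permutation. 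The identical reasoning applies to the right block $\sigma_q(\vec{f}_2)$. Thus the left and right blocks each independently range over all of $F$, consistent with the independent permutations $\sigma_p,\sigma_q$ used in the proof of Lemma \ref{LEMM_PSH_ortho}.

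Next I would invoke Lemma \ref{LEMM_Num_HSH}, which gives $|F| = C(2k,k)$. Consequently each of the two blocks can be chosen, independently of the other, from among $C(2k,k)$ distinct HSH vectors.

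The final step is a counting argument. I would set up the map that sends an ordered pair of HSH vectors, one assigned to each block, to the $4k$-length vector obtained by placing the first in positions $1$ through $2k$ and the second in positions $2k+1$ through $4k$. Because these two index ranges are disjoint, this map is a bijection onto $W$: two such concatenations coincide precisely when their left blocks coincide and their right blocks coincide, so the map is injective, and by Definition \ref{DEF_PSH} every PSH vector is the image of some pair, so it is surjective. By the counting principle, the number of such ordered pairs is $|F| \cdot |F|$, whence
\begin{equation}
N_W = |W| = |F|^2 = C(2k,k)^2.
\nonumber
\end{equation}

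The step requiring the most care is confirming that this map is a genuine bijection, in particular that each block independently attains \emph{all} of $F$ (the surjectivity onto each factor discussed above) and that distinct pairs yield distinct vectors. Injectivity is immediate from the disjointness of the two position ranges, so the real content lies in the surjectivity onto each factor; once both are in hand, the multiplication principle delivers $C(2k,k)^2$ with no risk of over- or under-counting.
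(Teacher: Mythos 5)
Your proof is correct and follows essentially the same route as the paper's: count the $C(2k,k)$ choices for each half and multiply. The paper's version is terser—it simply cites Lemma \ref{LEMM_Num_HSH} for each block and applies the product rule—whereas you additionally verify that a permuted HSH vector is still an HSH vector (so each block ranges over all of $F$) and that concatenation on disjoint index ranges is injective; these are worthwhile details the paper leaves implicit, but they do not change the argument.
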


\begin{proof}
Based on Lemma \ref{LEMM_Num_HSH}, the number of (distinct) vectors obtained from either the left- or the right-part of $\vec{w}_i$ is $C(2k,k)$. By product rule, we obtain
$ |W| \equiv N_W= C(2k,k)\cdot C(2k,k)= C(2k,k)^2$.
\end{proof}

Now, we arrive to an important result in this Section, i.e., vectors that are orthogonal to a particular SH vector $v_j \in V$, which is given by the following theorem.

\begin{mytheo}
\label{THEO_SH_ortho}
Let $\vec{v}_j \in V$ be a $4k$-order SH vector and suppose that $\sigma_i \in \Sigma$ is a $4k$-order permutation that transform the $4k$-order OSH vector $\vec{t}$ to the SH vector $\vec{v}_j$, i.e., $\sigma_i(\vec{t}) =\vec{v}_j$. Then, all of the vectors obtained by the permutation $\sigma_i$ to PSH vectors $\vec{w}_n \in W$, are orthogonal to $\vec{v}_j$, i.e., $\left< \sigma_i(\vec{w}_n),\vec{v}_j \right>=0$.
\end{mytheo}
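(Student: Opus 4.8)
The plan is to reduce the claim directly to the orthogonality of PSH vectors against the OSH vector (Lemma~\ref{LEMM_PSH_ortho}) by exploiting the fact that inner products are preserved when the \emph{same} permutation is applied to both arguments (Lemma~\ref{LEMM_invperm_innerprod}). The essential observation is that the permutation $\sigma_i$ appearing on the left of the inner product, namely $\sigma_i(\vec{w}_n)$, is precisely the same $\sigma_i$ that carries $\vec{t}$ to $\vec{v}_j$ by hypothesis. So the target vector $\vec{v}_j$ can be rewritten as a permuted copy of $\vec{t}$ under that identical permutation, exposing a common factor of $\sigma_i$ on both sides of the pairing.

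First I would use the hypothesis $\sigma_i(\vec{t}) = \vec{v}_j$ to substitute into the inner product, obtaining
\begin{equation}
\left< \sigma_i(\vec{w}_n), \vec{v}_j \right> = \left< \sigma_i(\vec{w}_n), \sigma_i(\vec{t}) \right>.
\nonumber
\end{equation}
Next I would invoke Lemma~\ref{LEMM_invperm_innerprod} with $\vec{v}_p = \vec{w}_n$ and $\vec{v}_q = \vec{t}$, which licenses stripping the common permutation $\sigma_i$ from both arguments, yielding
\begin{equation}
\left< \sigma_i(\vec{w}_n), \sigma_i(\vec{t}) \right> = \left< \vec{w}_n, \vec{t} \right>.
\nonumber
\end{equation}
Finally, since $\vec{w}_n \in W$ is a PSH vector, Lemma~\ref{LEMM_PSH_ortho} gives $\left< \vec{w}_n, \vec{t} \right> = 0$ immediately, closing the argument.

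I do not anticipate a genuine obstacle here, as the result is a short chaining of two already-established lemmas. The one point deserving explicit care is the verification that the hypotheses of Lemma~\ref{LEMM_invperm_innerprod} are actually met, i.e.\ that it is the \emph{identical} permutation $\sigma_i$ acting on both $\vec{w}_n$ and $\vec{t}$; this is guaranteed by the theorem's own hypothesis that $\sigma_i$ is the permutation realizing $\sigma_i(\vec{t}) = \vec{v}_j$. The conceptual content of the theorem is thus that orthogonality to the OSH vector is a property invariant under the whole orbit of permutations, so that the entire set $\{\sigma_i(\vec{w}_n)\}$ transported alongside $\vec{t}$ remains orthogonal to the transported $\vec{v}_j$; this invariance is exactly what Lemma~\ref{LEMM_invperm_innerprod} encodes, and no further computation beyond the three displayed equalities should be required.
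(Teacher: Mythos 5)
Your proposal is correct and follows essentially the same route as the paper's proof: both arguments rest on Lemma~\ref{LEMM_PSH_ortho} for $\left<\vec{w}_n,\vec{t}\right>=0$ and Lemma~\ref{LEMM_invperm_innerprod} for invariance of the inner product under a common permutation. The only (cosmetic) difference is direction: the paper starts from $\left<\vec{w}_n,\vec{t}\right>=0$, writes $\vec{t}=\sigma_i^{-1}(\vec{v}_j)$, and applies $\sigma_i$ to both arguments, whereas you start from the target inner product and strip $\sigma_i$, which avoids introducing $\sigma_i^{-1}$ altogether.
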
 

% --- PROOF ----
\begin{proof}
Consider the ($4k$ order) OSH vector $\vec{t}$, an SH vector $\vec{v}_j \in V$, PSH vectors $\vec{w}_n \in W$, and permutation $\sigma_i \in \Sigma$, so that $\sigma_i(\vec{t})= \vec{v}_j$, and therefore $\vec{t}=\sigma_i^{-1}(\vec{v}_j)$. Then, based on Lemmas \ref{LEMM_PSH_ortho} and \ref{LEMM_invperm_innerprod}, since $\sigma_i \cdot \sigma_i^{-1}=\sigma_0$, we get

\begin{equation}
\left< \vec{w}_n,\vec{t} \right> = 0 
\nonumber
\end{equation} 
\begin{equation}
\iff
\left<\vec{w}_n, \sigma_i^{-1}(\vec{v}_j) \right> = 0 
\nonumber
\end{equation} 
\begin{equation}
\iff
\left<\sigma_i(\vec{w}_n), \sigma_i(\sigma_i^{-1}(\vec{v}_j)) \right> = 0 
\nonumber
\end{equation} 
\begin{equation}
\iff
\left<\sigma_i(\vec{w}_n), \sigma_0(\vec{v}_j) \right> = 0 
\nonumber
\end{equation} 
\begin{equation}
\iff
\left<\sigma_i(\vec{w}_n), \vec{v}_j \right> = 0 
\nonumber
\end{equation} 
\end{proof}
In practice, since the known vectors are $\vec{v}_i$ and $\vec{t}$ (which is defined), to obtain the set of orthogonal vectors to $\vec{v}_j$, we will first compute $\sigma_i^{-1}$, then calculate $\sigma_i$, and at the final step, we apply it to $W$. 

Based on the Theorem \ref{THEO_SH_ortho}, we obtain the number of vectors that are orthogonal to $\vec{v}_j \in V$ as follows.

\begin{mycoro}
\label{CORO_Num_orthov}
Every $4k$-order SH vector $\vec{v}_i \in V$ is orthogonal to $N_O=C(2k,k)^2$ number of another $4k$-order SH vector in $V$.
\end{mycoro}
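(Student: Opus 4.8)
The plan is to identify the set of SH vectors orthogonal to $\vec{v}_j$ with the image of the PSH set $W$ under the permutation $\sigma_i$ satisfying $\sigma_i(\vec{t})=\vec{v}_j$, and then to count that image. Theorem \ref{THEO_SH_ortho} already supplies one inclusion: for every PSH vector $\vec{w}_n \in W$ the vector $\sigma_i(\vec{w}_n)$ is orthogonal to $\vec{v}_j$. Since each $\vec{w}_n$ carries exactly $2k$ entries equal to $+1$ and $2k$ equal to $-1$, and a permutation merely rearranges entries, each $\sigma_i(\vec{w}_n)$ is again a $4k$-order SH vector, i.e. lies in $V$. So the construction does land inside $V$, and none of these vectors equals $\vec{v}_j$ itself since $\left<\vec{v}_j,\vec{v}_j\right>=4k\neq 0$.

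Next I would establish that the map $\vec{w}_n \mapsto \sigma_i(\vec{w}_n)$ is injective. Because $\sigma_i$ is a bijective rearrangement of positions, the equality $\sigma_i(\vec{w}_n)=\sigma_i(\vec{w}_m)$ forces $\vec{w}_n=\vec{w}_m$ upon applying $\sigma_i^{-1}$. Hence the number of orthogonal SH vectors produced this way is at least $|W|$, which is $C(2k,k)^2$ by Lemma \ref{LEMM_Num_PSH}.

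The main obstacle, and the step that actually pins the count at exactly $C(2k,k)^2$, is the converse inclusion: every SH vector orthogonal to $\vec{v}_j$ must arise as $\sigma_i(\vec{w}_n)$ for some $\vec{w}_n \in W$. By Lemma \ref{LEMM_invperm_innerprod}, $\sigma_i$ transports orthogonality, so $\vec{u} \in V$ is orthogonal to $\vec{v}_j=\sigma_i(\vec{t})$ if and only if $\sigma_i^{-1}(\vec{u})$ is orthogonal to $\vec{t}$; and since $\sigma_i$ preserves the number of $+1$ entries, it restricts to a bijection of $V$. Thus it suffices to show that the SH vectors orthogonal to the OSH vector $\vec{t}$ of Definition \ref{DEF_OSH} are precisely the PSH vectors $W$. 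To prove this, I would split an arbitrary SH vector $\vec{u}$ into a left block and a right block of length $2k$ each and let $a$ and $b$ denote the numbers of $+1$ entries in the two blocks. Expanding $\left< \vec{u},\vec{t}\right>$ block-wise against $\vec{t}=\left(-{}^{2k}\vec{1}^T, {}^{2k}\vec{1}^T\right)^T$ yields a value equal to $2(b-a)$, so orthogonality is equivalent to $a=b$; combined with the SH constraint $a+b=2k$ this forces $a=b=k$. Hence each block is balanced, i.e. an HSH vector, and $\vec{u}$ is by Definition \ref{DEF_PSH} a PSH vector. This converse is the only direction that needs a genuine argument rather than a citation of an earlier lemma, so I expect it to be where the work lies.

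Combining the two inclusions, the set of SH vectors orthogonal to $\vec{v}_j$ is in bijection with $W$ via $\sigma_i$, so its cardinality equals $|W|=N_W=C(2k,k)^2$. Since $\vec{v}_j$ was an arbitrary element of $V$, the same count holds for every SH vector, which gives $N_O=C(2k,k)^2$ as claimed.
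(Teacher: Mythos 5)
Your proposal is correct, and it is strictly more complete than the paper's own argument. The paper's proof of Corollary \ref{CORO_Num_orthov} consists only of your first half: it invokes Theorem \ref{THEO_SH_ortho} to exhibit the family $\sigma_i(W)$, notes that bijectivity of $\sigma_i$ gives $\sigma_i(W)\subset V$ and $|\sigma_i(W)|=|W|$, and then concludes $N_O=N_W=C(2k,k)^2$ from Lemma \ref{LEMM_Num_PSH}. That establishes only that \emph{at least} $C(2k,k)^2$ SH vectors are orthogonal to $\vec{v}_j$; the paper never rules out additional orthogonal vectors outside $\sigma_i(W)$, even though the exact value of $N_O$ is what is later used for the $N_O$-regularity claim in Proposition \ref{PROP_SHGraph} and for $p_\perp=N_O/(N_V-1)$. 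You correctly identify this converse inclusion as the step that "actually pins the count," and your block-counting argument closes it: writing $a$ and $b$ for the numbers of $+1$ entries in the two half-blocks of an SH vector $\vec{u}$, the expansion $\left<\vec{u},\vec{t}\right>=2(b-a)$ together with $a+b=2k$ forces $a=b=k$, so every SH vector orthogonal to $\vec{t}$ is a PSH vector, and conjugating by $\sigma_i$ transports this to an arbitrary $\vec{v}_j$. Your explicit remarks that the map $\vec{w}_n\mapsto\sigma_i(\vec{w}_n)$ is injective and that $\vec{v}_j\notin\sigma_i(W)$ (since $\left<\vec{v}_j,\vec{v}_j\right>=4k\neq0$) are also details the paper leaves implicit. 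In short: same first inclusion as the paper, plus the reverse inclusion the paper omits; your version is the one that actually proves the stated equality rather than a lower bound.
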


\begin{proof}
Recall from Theorem \ref{THEO_SH_ortho} that each $\vec{v}_j \in V$ is orthogonal to $\sigma_i(\vec{w}_n)$, for a particular permutation $\sigma_i$, where $\vec{w}_n \in W \subset V$ are PSH vectors. Since permutation is bijective, it is clear that  $\sigma_i(\vec{w}_n)$ is also an SH vector, and therefore $\sigma_i(W) \subset V$. Furthermore, the bijectivity of the permutation implies that $|\sigma_i(W)|=|W|$. Accordingly, by Lemma \ref{LEMM_Num_PSH}, we obtain $N_O=N_W=C(2k,k)^2$.
\end{proof}

Calculation of orthogonal probability of a pair of randomly selected SH vectors and the probability of existence of H-matrix can be conducted by using orthogonality graph $\Gamma=(V,E)$. This graph represents orthogonality relationship between a pair of SH vectors $\vec{v}_i,\vec{v}_j \in V$, for all SH vectors in $V$. The vertices $V$ of the graph represents SH vectors $\vec{v}_i$, whereas the edges $e_{ij} \in E$ between two vertices $\vec{v}_i,\vec{v}_j$ states that they are orthogonal to each other, i.e., $\vec{v}_i \perp \vec{v}_j$. For a given set of $4k$ order SH vectors, we can make the following statement.

\begin{myprop}
\label{PROP_SHGraph}
The orthogonality relationship of $4k$-order SH vectors can be represented by an $N_V$-order $N_O$-regular (undirected) graph, where $N_V=|V|$ is the number of all SH vectors $\vec{v}_i \in V$, and $N_O$ is the number of all $4k$-order PSH vectors $\vec{w}_j \in W$, i.e., $N_O=|W|$.
\end{myprop}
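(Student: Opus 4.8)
The plan is to verify directly that the graph $\Gamma=(V,E)$ defined by the orthogonality relation satisfies the three properties asserted: that it has order $N_V$, that it is undirected, and that it is $N_O$-regular. The first two are immediate from the construction, while the regularity is the substantive claim and will follow from Corollary \ref{CORO_Num_orthov}.

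First I would dispatch the order. By construction the vertex set of $\Gamma$ is exactly the set $V$ of all $4k$-order SH vectors, so the number of vertices is $|V|=N_V$, and hence $\Gamma$ is an $N_V$-order graph. Next, for undirectedness, I would observe that the adjacency relation is the orthogonality relation $\vec{v}_i \perp \vec{v}_j$, which is symmetric because the inner product is symmetric: $\left< \vec{v}_i, \vec{v}_j \right> = \left< \vec{v}_j, \vec{v}_i \right>$, so that $\vec{v}_i \perp \vec{v}_j$ holds if and only if $\vec{v}_j \perp \vec{v}_i$. Consequently each edge $e_{ij}$ coincides with $e_{ji}$, and $\Gamma$ is undirected.

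The heart of the argument is the $N_O$-regularity, i.e., that every vertex has the same degree $N_O$. I would fix an arbitrary vertex $\vec{v}_j \in V$; its degree is, by the definition of $E$, the number of distinct SH vectors in $V$ that are orthogonal to $\vec{v}_j$. By Corollary \ref{CORO_Num_orthov} this number is exactly $N_O=C(2k,k)^2=|W|$. Since $\vec{v}_j$ was arbitrary, every vertex has degree $N_O$, and therefore $\Gamma$ is $N_O$-regular.

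Before concluding, I would confirm two bookkeeping points that justify identifying the graph-theoretic degree with the count supplied by the corollary. First, there are no self-loops: $\vec{v}_j$ is not adjacent to itself, because $\left< \vec{v}_j, \vec{v}_j \right> = 4k \neq 0$ and hence $\vec{v}_j \not\perp \vec{v}_j$. Second, the $N_O$ vectors orthogonal to $\vec{v}_j$ are genuinely distinct vertices of $\Gamma$; this is precisely the bijectivity argument already carried out in the proof of Corollary \ref{CORO_Num_orthov}, where $\sigma_i(W)$ is shown to have cardinality $|W|=N_O$. The main obstacle in this proposition is therefore not the counting itself, which is discharged by Corollary \ref{CORO_Num_orthov}, but rather ensuring that the degree recorded in $\Gamma$ coincides with that count; the no-self-loop and distinctness checks settle this, completing the proof.
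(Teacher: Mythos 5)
Your proposal is correct and follows essentially the same route as the paper's own proof: identify vertices with $V$ (hence order $N_V$), use symmetry of the inner product for undirectedness, and invoke Corollary \ref{CORO_Num_orthov} to fix every vertex degree at $N_O$. The only addition is your explicit check for no self-loops and for distinctness of the $N_O$ neighbours, which the paper leaves implicit but which does not change the argument.
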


\begin{proof}
By representing the SH vectors $\vec{v}_i, \vec{v}_j \in V$ as vertices of a graph $\Gamma=(V,E)$ , where an edge between $\vec{v}_i$ and $\vec{v}_j$ represents $\vec{v}_i \perp \vec{v}_j$, Corollary \ref{CORO_Num_orthov} implies that the degree of the vertices are identicals, whose value is $N_O$; therefore $\Gamma$ is $N_O$-regular. Since there are $N_V$ vertices in $\Gamma$, it is an $N_V$-order (undirected) graph.
\end{proof}

\begin{myexam}
Consider a $k=1$ or $4$ order SH vector. Based on the previous lemmas, we have the number of SH vector $N_V=C(4,2)=6$, and each of the vector will be orthogonal to another $N_O=C(2,1)^2=4$ SH vectors. The list of the SH vectors are: $\vec{v}_1=\left( \begin{array}{cccc} +&+&-&-\end{array}\right)^T$, $\vec{v}_2=\left( \begin{array}{cccc} -&-&+&+\end{array}\right)^T$, $\vec{v}_3=\left( \begin{array}{cccc} +&-&+&-\end{array}\right)^T$, $\vec{v}_4=\left( \begin{array}{cccc} +&-&-&+\end{array}\right)^T$, $\vec{v}_5=\left( \begin{array}{cccc} -&+&+&-\end{array}\right)^T$, and $\vec{v}_6=\left( \begin{array}{cccc} -&+&-&+\end{array}\right)^T$. Based on Proposition \ref{PROP_SHGraph}, the orthogonal relationship of the 4-order SH vectors can be represented by a 6-order 4-regular graph, which is shown in Figure \ref{fig:FIG_SH4}.

\begin{figure}
 \centering
 \includegraphics[scale=0.50]{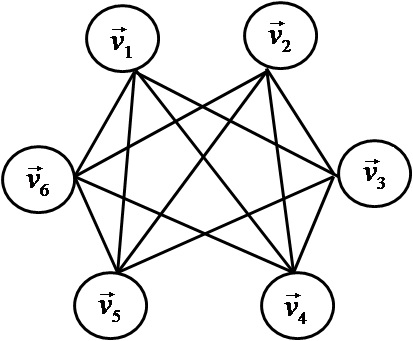}
 \caption{A 6-order 4-regular graph representing orthogonality relationship of 4-order SH vectors}
 \label{fig:FIG_SH4}
\end{figure}

\end{myexam}

\textbf{Remark}. Graph representation of a H-matrix has been described by Ito in \cite{Ito1985}. In his paper, the graph is denoted by $\Delta(m)$, with $m=4k$, for a positive integer $k$.  Vertices of the graphs are $(1,-1)$ vectors of length $4k$ consisting of even number of $1$s, whose adjacency relation consists in orthogonality \cite{Alvarez}.

%
% -- Section 3: SH matrix construction from SH vectors --
\section{Construction of SH Matrices from SH Vectors}
This section describes construction of SH matrices from the SH vectors formulated in the previous section. A straightforward way to construct the matrix is by listing up all possible combinations of $(4k-1)$ SH vectors $\vec{v}_i \in V$ as the column vectors, along with the unity vector $\vec{o}$, to form $4k$ order candidate matrices $\{\vec{Q}_i\}$; which is called QSH (Quasi SH) matrix. Then, we perform orthogonality test to each of the matrices of the corresponding order. However, such an exhaustive searching needs a large amount of computational resource, since the list of the candidates grows exponentially with the increase of the matrix's order $4k$. 

Alternatively, we can employs a method, which in principle is constructing or finding a single SH matrix of the corresponding order, instead of finding all of them suggested in the exhaustive search. The first method addressing this issue is RVS (Random Vector Selection). The RVS subsequently construct the matrix, column by-column, by generating a random SH vector while maintaining the orthogonality of the matrix in each of the stages. The second method is OSA (Orthogonality by Simulated Annealing), in which, a random QSH matrix is firstly generated, then we flip or exchange a randomly selected pair of $\{1,-1\}$ entries, at  a randomly selected column and checked a deviation error from a $4k$ order SH matrix. Obviously, the RVS and OSA methods are probabilistic in nature.

We evaluate the capability of each method by constructing a (low order) SH matrix which cannot be constructed by the Sylvester's method, such as $k=12$. First, consider the following definition of the candidate matrix $\vec{Q}$.

\begin{mydef}
A $4k$-order QSH (Quasi Seminormalized Hadamard)-matrix $\vec{Q}$, where $k$ a is positive integer, is a $4k\times 4k$ matrix whose first column is (the $4k$-order unity vector) $\vec{o}$ and the rest ($4k-1)$ columns are (distinct) $4k$-order SH vectors.
\end{mydef}
Note that a $4k$-order QSH matrix $\vec{Q}$ is not necessarily orthogonal. When it is orthogonal, then it becomes a $4k$-order SH matrix  $\vec{H}$. 

%The QSH matrix can be associated with the partial Hadamard matrix discussed in \cite{Alvarez}, in respect that, since any $4k$-order SH vector orthogonal to $\vec{o}$, there are at least $2$ orthogonal vectors in each $\vec{Q}$.

To enumerate all of the QSH matrix, we define the set of $4k$-order QSH matrix as follows,

\begin{mydef}
The set of $4k$-order QSH matrices $\Theta=\{ \vec{Q}_1, \vec{Q}_2, ..., \vec{Q}_{N_Q}\}$ contains all possible $4k$-order QSH matrix, which is built from $4k$-order SH vectors $\vec{v}_i \in V$ and $4k$-order unity vector $\vec{o}$.
\end{mydef}
By Lemma \ref{LEMM_SH_number}, we know the number $N_V$ of $4k$ order SH vectors. Since each of the QSH matrix needs $(4k-1)$ SH vectors, the number of QSH matrix can be calculated, which is given by the following Lemma.

\begin{mylemm}
\label{LEMM_NumQSH}
The number of $4k$ order QSH matrix is $N_{\vec{Q}}= \frac{N_V!}{\left( N_V-(4k-1)\right)!}$.
\end{mylemm}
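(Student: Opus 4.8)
The plan is to count the number of QSH matrices directly by the multiplication (counting) principle, treating the choice of columns as an ordered selection without repetition. Since the first column of any $\vec{Q}$ is fixed to be the unity vector $\vec{o}$, it contributes no freedom to the count; all of the freedom resides in the remaining $4k-1$ columns, each of which, by the definition of a QSH matrix, must be a \emph{distinct} SH vector drawn from $V$, where $|V|=N_V=C(4k,2k)$ by Lemma \ref{LEMM_SH_number}.

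First I would regard the column positions $2,3,\ldots,4k$ as distinguishable, since a matrix is determined by which SH vector occupies which position, so that two different assignments of SH vectors to these ordered positions yield genuinely different matrices $\vec{Q}$. I would then fill the positions one at a time and apply the product rule: the second column may be any of the $N_V$ vectors in $V$; the third column any vector except the one already used, giving $N_V-1$ choices; and in general, once $j-1$ distinct vectors have been placed in columns $2$ through $j$, the $(j+1)$-th column admits $N_V-(j-1)$ choices. Carrying this through all $4k-1$ columns gives
\begin{equation}
N_{\vec{Q}} = N_V \cdot (N_V - 1) \cdots \left(N_V - (4k-2)\right),
\nonumber
\end{equation}
a product of $4k-1$ descending factors, i.e.\ the falling factorial $P(N_V,4k-1)$. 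Rewriting this product as a ratio of factorials yields $N_{\vec{Q}} = \frac{N_V!}{\left(N_V - (4k-1)\right)!}$, as claimed.

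Because this is a direct application of the counting principle, I do not expect a serious obstacle; the only points demanding care are conceptual rather than computational. I would emphasize (i) that order matters, so this is a permutation (ordered selection) count and not a combination, and (ii) that the distinctness requirement among the $4k-1$ columns is exactly what forces the falling-factorial form, as opposed to the $N_V^{4k-1}$ count one would obtain if repeated columns were permitted. I would also note in passing that the expression is well-defined and positive precisely when $N_V = C(4k,2k) \geq 4k-1$, which holds for every positive integer $k$, so that at least one QSH matrix always exists.
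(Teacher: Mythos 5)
Your proof is correct and follows essentially the same route as the paper's: both fix the first column to $\vec{o}$ and then apply the product rule to fill the remaining $4k-1$ columns with distinct SH vectors, obtaining the falling factorial $N_V(N_V-1)\cdots\left(N_V-(4k-2)\right)=\frac{N_V!}{\left(N_V-(4k-1)\right)!}$. Your added remarks on why order matters and on well-definedness are sensible clarifications but do not change the argument.
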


\begin{proof}
Recall that the first column of $\vec{Q}$ is a $4k$ order unity matrix $\vec{o}$ and the remaining ones are $(4k-1)$ number of distinct $\vec{v}_i \in V$, with $|V|=N_V$. Then, by basic counting principle, there are $N_V$ ways to fill the $2^{nd}$ column,  $(N_V -1)$ ways to fill the $3^{rd}$, $\cdots$, $\left( N_V-(4k-2) \right)$ to fill the $4k^{th}$ column. Therefore, the total number of QSH matrix is
\begin{equation}
N_{\vec{Q}}=N_V(N_V-1)(N_V-2) \cdots (N_V-(4k-2))= \frac{N_V!}{\left( N_V-(4k-1)\right)!}
\nonumber
\end{equation}
\end{proof}

Note that the number of QSH matrix given by Lemma \ref{LEMM_NumQSH} related to non-unique SH matrices, since a same subset of $(4k-1)$  SH vectors have been used repeatedly in many of the QSH matrix counted by the lemma. The number of unique QSH matrix $N_{\vec{Q},U}$ should consider only non-redundant combination of the subset, which is given by the following lemma,

\begin{mylemm}
The number of unique QSH matrices of order $4k$ is $N_{\vec{Q},U}=C(N_V,4k-1)$.
\end{mylemm}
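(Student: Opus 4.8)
The plan is to derive $N_{\vec{Q},U}$ directly as a count of unordered selections, mirroring the combination argument already used in Lemma \ref{LEMM_SH_number}. The starting observation is that the first column of every QSH matrix is fixed to the unity vector $\vec{o}$, so a QSH matrix is completely determined by which $(4k-1)$ distinct SH vectors occupy its remaining columns and in what order. Since reordering the columns of a matrix is one of the equivalence operations introduced in Section 1, two QSH matrices built from the same subset of SH vectors are equivalent; hence a \emph{unique} QSH matrix corresponds to an unordered subset of size $(4k-1)$ drawn from $V$, rather than to an ordered arrangement. First I would make this precise by declaring that two QSH matrices represent the same unique matrix if and only if their sets of column vectors coincide, which (since all columns are distinct and the first is always $\vec{o}$) reduces to equality of the two sets of $(4k-1)$ SH vectors filling columns $2,\dots,4k$.

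With the equivalence fixed, counting unique QSH matrices is exactly counting $(4k-1)$-element subsets of the $N_V$-element set $V$. I would then invoke the basic counting principle for combinations: the number of ways to choose an unordered subset of $(4k-1)$ objects from a set of $N_V$ objects is $C(N_V, 4k-1)$, which yields $N_{\vec{Q},U} = C(N_V, 4k-1)$ immediately. As a consistency check against Lemma \ref{LEMM_NumQSH}, I would note that each such subset can be arranged into the $(4k-1)$ non-unity columns in $(4k-1)!$ distinct orders, each counted separately in the ordered total $N_{\vec{Q}}$; therefore $N_{\vec{Q}} = (4k-1)!\,N_{\vec{Q},U}$, and dividing recovers $N_{\vec{Q},U} = \frac{N_V!}{(N_V-(4k-1))!\,(4k-1)!} = C(N_V, 4k-1)$.

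The argument is elementary, so the only genuinely delicate point — and the one I would take care over — is justifying the overcounting factor, namely that every distinct subset contributes exactly $(4k-1)!$ ordered QSH matrices with no collisions and no duplication across subsets. This is guaranteed precisely because the $(4k-1)$ chosen SH vectors are required to be distinct: all $(4k-1)!$ column orderings of a fixed subset are themselves distinct ordered matrices, and two different subsets can never produce a common ordering, so the map from ordered matrices to their underlying subsets is an exactly $(4k-1)!$-to-one surjection. Establishing this clean fibration is what makes the division legitimate and the combination count exact.
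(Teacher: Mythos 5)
Your proposal is correct and takes essentially the same route as the paper, which simply counts unordered $(4k-1)$-element selections from the $N_V$ SH vectors to get $C(N_V,4k-1)$. Your version is more careful than the paper's one-line argument --- in particular you pin down what ``unique'' means and verify the exact $(4k-1)!$-to-one correspondence with the ordered count $N_{\vec{Q}}$ from Lemma~\ref{LEMM_NumQSH} --- but this is an elaboration of the same combination argument, not a different approach.
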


\begin{proof}
By counting argument, given $N_V$ number of SH matrix to be arranged in $(4k-1)$ column of the QSH matrix, the number of distinct QSH matrices is given by combination $C(N_V,4k-1)$.
\end{proof}

\subsection{Exhaustive Search}
In principle, this method evaluates all possible combinations of $(4k-1)$ SH vectors from $N_v$ number of $v_i \in V$ that satisfy the orthogonality condition. This method starts with the construction of the set of SH vectors $V$. Then, all possible QSH-matrices $\vec{Q}$ are constructed by selecting any $(4k-1)$ combination of $N_V$ number of SH vectors. At the final stage, we evaluate the orthogonality of each of the QSH matrix. The following Algorithm \ref{ALG_xhstvSearch} illustrate a method to generate $V$, which is followed by construction of QSH-matrix, and the evaluation of their orthogonality. 

\begin{algorithm}
  \caption{Exhaustive Search Method}
  \label{ALG_xhstvSearch}
  \begin{algorithmic}[1]
    \STATE \textbf{Input}: positive integer $k$, where $4k$ represents the order of SH-matrix
    \STATE \textbf{Output}: A set \textbf{H} of $4k$-order SH-matrix
    \STATE Construct the set of $4k$-order SH vector $V=\{\vec{v}_1,\vec{v}_2, \cdots, \vec{v}_{N_V}\}$
    \STATE Construct $L_G$, which is a list of $N_{\vec{Q}}$ length of $(4k-1)$-combination
    \STATE Initialize \textbf{H}-set to empty: $\textbf{H} \leftarrow \{\}$
    \FOR{ $n=1$ to $N_G$}
    	\STATE $\vec{Q}=\vec{o} \cup V[L_G(n)]$
    	\IF { (\textit{isHadamard}($\vec{Q}$)== 1)}
    	\STATE $\textbf{H} \leftarrow \textbf{H} \cup \vec{Q}$
		\ENDIF
	\ENDFOR
  \end{algorithmic}
\end{algorithm}
In line no.7 of the algorithm, $V[L_G(n)]$ means selecting $(4k-1)$ number of SH vectors from $V$. Note that \textbf{H} in the Algorithm \ref{ALG_xhstvSearch} refers to the set of found SH matrix after successful orthogonality test of $\vec{Q}$ by \emph{isHadamard} routine. The following examples, give an illustration on the output of the algorithm.

\begin{myexam}
With a moderate computing resource in a notebook or desktop PC, we can find all of the SH matrices for $k=1$. We generate $N_V=C(4,2)=6$ number of SH vectors and obtain $N_{\vec{Q,U}}=20$ number of unique QSH matrices. After checking all of $\vec{Q}_i$, we found that 8 of them are SH matrices. The list of the vectors given by the algorithm are shown in Fig. \ref{fig:FIG_SH4_Vectors}, which in fact is identical to the SH vectors presented in Example \ref{EXAM_4SHV}. The found SH matrices are shown in Fig.\ref{fig:FIG_SH4_Matrices}.

\begin{figure}
  \centering
  \includegraphics[scale=0.6]{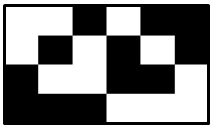}
  \caption{Entries in 4-order SH vectors: black indicates -1, white corresponds to 1 entries}
  \label{fig:FIG_SH4_Vectors}
\end{figure}

\begin{figure}
  \centering
  \includegraphics[scale=0.6]{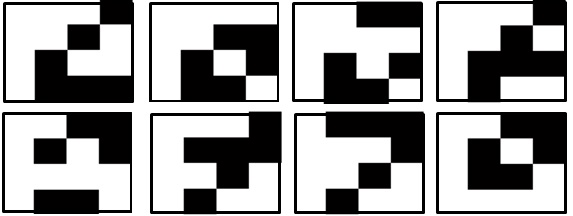}
  \caption{A list of 4-order SH matrices obtained by the exhaustive method}
  \label{fig:FIG_SH4_Matrices}
\end{figure}
 
\end{myexam}
 
\subsection{Random Vector Selection}

The RVS (Random Vector Selection) is a probabilistic algorithm which works as follows. For a $4k$-order matrix, we initially fill the first column with the unity matrix $\vec{o}$ and fill the second column with one of randomly selected $\vec{v}_1 \in V$. The third column vector $\vec{v}_2$ is also selected randomly from $V$, but if it is not orthogonal to $\vec{v}_1$, we choose another vector until we found the suitable one that is orthogonal to $\vec{v}_1$. Construction of the fourth column is conducted similarly by randomly choose a vector from $V$ and it also should pass the orthogonality test, i.e., it has to be orthogonal with the existing vectors $\{ \vec{v}_1, \vec{v}_2\}$. This process of random selection and orthogonality test are conducted up to $(4k-1)$ number of vectors are obtained, at which all of the column of the matrix are filled.

For practical consideration, such as when implementing the algorithm as a computer program, we do not generate the set $V$ because the size can be substantially large. Instead, we generate one vector at a time by random permutation of $2k$ number of $-1$ into $4k$ places, and set the remaining ones with $1$, following the definition of SH vector given in the previous section. 

The probabilistic algorithm to construct a Hadamard matrix is formulated in Algorithm \ref{ALG_RVS}. Let $4k$ be the order of the matrix we want to construct. We collect the column vector that pass the orthogonality tests in the set $H$. Output of the algorithm for $12$ order SH matrix construction is shown in Fig.\ref{fig:FIG_SH12_Matrix}.

\begin{algorithm}
  \caption{Random Vector Selection Algorithm}
  \label{ALG_RVS}
  \begin{algorithmic}[1]
    \STATE \textbf{Input}: positive integer $k$, where $4k$ represents the order of SH-matrix
    \STATE \textbf{Output}: A $4k$-order SH-matrix
    \STATE Generate the set of $4k$-order SH-vectors $V=\{\vec{v}_2, \cdots, \vec{v}_{N_V} \}$
    \STATE Fill the first column of $\vec{H}$ with $\vec{o}$: $\vec{H} \leftarrow \vec{o}$
    \STATE Fill the second column with randomly selected $\vec{v}_1 \in V$: $H \leftarrow \vec{v}_1$
    \STATE Initialize the counter: $n \leftarrow 1$
    \WHILE{$n<(4k-1)$}
    	\STATE Randomly select SH vector from $V$ and check the orthogonality to all existing column in $H$, until a vector $\vec{v}_j \in V$ that is orthogonal to already selected vectors in $\vec{H}$ is found.
    	\STATE Insert the found vector to the column of $\vec{H}$: $\vec{H} \leftarrow \vec{H} \cup \vec{v}_j$
    	\STATE $n \leftarrow n+1$
    \ENDWHILE
  \end{algorithmic}
\end{algorithm}
\begin{figure}
  \centering
  \includegraphics[scale=0.6]{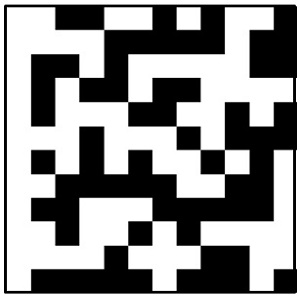}
  \includegraphics[scale=0.6]{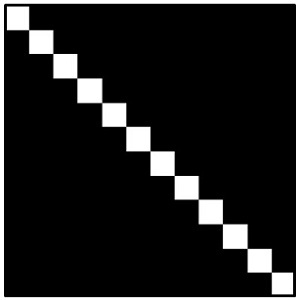}
  \caption{Output of the algorithm: left part is generated $\vec{H}$, right part is indicator matrix $\vec{H}^T\vec{H}$}
  \label{fig:FIG_SH12_Matrix}
\end{figure}

We have performed a computation experiment to construct SH matrices of orders 12, 16, 20, and 24. The iteration numbers to obtain the first 10 vectors, after setting the first one by a unity vector $\vec{o}$ and the second one with a randomly selected $\vec{v}_1 \in V$, are displayed in Table 1. In running the program, we set a maximum iteration limit where the search of the orthogonal vector is allowed, in which case the program restarted with a different random number seed.

\begin{table}
 \label{TBL_RVS_Iter}
 \centering
 \caption{Average iterations to obtain the first 10 orthogonal vectors}
 \begin{tabular}{|c|c|c|c|c|c|}
 \hline
 & & \multicolumn{4}{|c|}{Iteration Number for Corresponding Order}\\
 \hline
 No & $i$ of $\vec{v}_i$ & ${}^{12}\vec{H}$ & ${}^{16}\vec{H}$ &${}^{20}\vec{H}$ & ${}^{24}\vec{H}$ \\
 \hline
 1 & 3 & 2 & 2 & 3 & 1 \\
 2 & 4 & 5 & 6 & 9 & 15 \\
 3 & 5 & 10 & 18 & 25 & 30 \\
 4 & 6 & 58 & 82 & 60 & 25 \\
 5 & 7 & 133 & 181 & 284 & 777 \\
 6 & 8 & 118 & 342 & 499 & 2474 \\
 7 & 9 & 114 & 246 & 3160 & 4559 \\
 8 & 10 & 126 & 1003 & 8821 & 6454 \\
 9 & 11 & 296 & 1006 & 25248 & 66895 \\
 10 & 12 & 294 & 891 & 117048 & 84081 \\
 \hline
 \end{tabular}
\end{table}

Observation to each column of the table indicates that at $i^{th}$-stage, the selection of the $i^{th}$ orthogonal vector $v_i$, becomes increasingly difficult with the increasing number of already-selected vectors $ \{\vec{v}_1, \vec{v}_2, \cdots, \vec{v}_{i-1} \}$ . Additionally, the iteration number increased non-liniearly. Observation to each row indicates that the iteration number also increased in a non-linear fashion with the order of the Hadamard matrix. Since the number of iteration indicates the difficulty in finding the orthogonal $\vec{v}_i$, it means that the probability to find a vector depends on both of the stage and order of the Hadamard matrix. 

Figure \ref{fig:FIG_RVS_Iter} displays a logarithmic plot of the data listed in Table \ref{TBL_RVS_Iter}. The figure shows that curves of higher order $H$ lies above the lower ones which indicates that the higher order $H$ needs more iteration than the lower one. Additionally, the probability of finding an orthogonal vector given a selected set is decreasing with the iteration.

\begin{figure}
 \centering
 \includegraphics[scale=0.5]{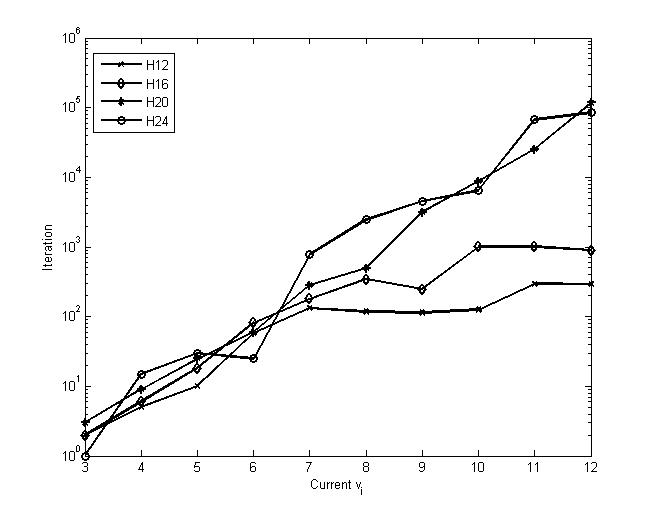}
  \caption{Semilogarithmic plot of iteration number at each stage}
 \label{fig:FIG_RVS_Iter}
\end{figure}

\subsection{Orthogonalization by Simulated Annealing}

The SA (Simulated Annealing) is a probabilistic algorithm that works by first putting a system to be optimized into a high temperature (energy) state, and then it is gradually cooled down. Transition of the system follows the first order Markov Chain, where a particular configuration $\vec{x}_n$ at stage $n$, depends only on the previous configuration $\vec{x}_{n-1}$ at stage $(n-1)$ and is not affected by the earlier system's histories at $(n-2), (n-3), (n-4), \cdots$.  

At a particular (time-) stage $n$, the system proposes a transition from $\vec{x}_n$ to $\vec{x}_{n+1}$. If it moves the system to a lower energy state, the change is accepted. However, if the system moves to a higher energy state, the proposal is only accepted with a certain amount of probability or follows a criterion. We can adopt Metropolis criterion \cite{Metropolis1953}, where the change of the system from a state with energy $E_n\equiv E(\vec{x}_n)$ to a state with energy$E_{n+1}$ is allowed with probability $min(1,e^{-(E_{n+1}-E_n)/k_BT})$, where $k_B$ is the Boltzman's constant and $T$ is the current temperature of the system. 

We also can use a scheduled cooling $T(n)$; which for simplicity is translated into threshold probability $P(T(n))$, so that a scheduled threshold probability $P(n)$, instead of scheduled temperature $T(n)$, is used. At the initial stage, the probability threshold is set to 0.5, corresponding to a high temperature stage, and then increased to probability 1.0 at the final stage.

In the SH matrix construction problem, the system $\vec{x}_n$ corresponds to the QSH matrix $\vec{Q}_n$, which is evolved from its initial state $\vec{Q}_0$ into the target state of becoming an SH matrix $\vec{Q}_N = \vec{H}$ at stage-$N$. Therefore, the evolution of the system is given by the following Markov chain:

\begin{equation}
 \vec{Q}_0 \to \vec{Q}_1 \to \cdots \to \vec{Q}_n \to \cdots \to\vec{Q}_N = \vec{H}
\end{equation}
The energy of the system $E(\vec{Q}_n)$ should be chosen to indicate the deviation of the QSH matrix from the SH matrix. Therefore, it is a measure of non-orthogonality of the $\vec{Q}_n$ which can be defined as follows

\begin{equation}
  \label{Eq1}
   E(\vec{Q}_n)=\sum_{j=1}^{4k}\sum_{i=1}^{4k} \left|D_{\vec{Q}_n}(i,j)\right|-4k \delta_{i,j} 
\end{equation}
where $D_{\vec{Q}_n}=\vec{Q}_n^T\vec{Q}_n$, and $\delta_{i,j}$ is the Kronecker delta whose value is 1 if $i=j$ and 0 otherwise. Note that the definition of energy implies that the value will be zero for an SH matrix $\vec{H}$; i.e., $E(\vec{H})=0$.

The system update is performed by randomly selecting a column; except the first one which is fixed to $\vec{o}$, and flip a pair of $-1$ and $1$ entries (i.e., we flip -1  to 1, and the entry 1 to -1), which are also randomly chosen at the selected column. The purpose of altering a pair of $\{1,-1\}$ in the column is to ensure seminormality of the column vector of $\vec{Q}_n$, i.e., it changes from  $\vec{v}_i \in V$ into another SH vector $\vec{v}_j \in V$.
After flipping the entries, the Metropolis criterion is applied, i.e., we check the system energy $E(\vec{Q}_n)$ to decide the updating proposal. If the flip into $\vec{Q}_{n+1}$ decreases the energy, we accept the update but if the flip increases the energy, we only accept with a probability, i.e., we generate a random number that is distributed uniformly in  interval $[0,1]$ and if it is greater than a particular probability threshold $P(n)$, the update is accepted; otherwise it is rejected. The detail of the algorithm and some construction examples of low-order H-matrices is described in \cite{Suksmono2016}.
%
%--Section 4: Probabilistic Analysis
\section{Probabilistic Analysis}

In this section, we will derive the orthogonal probability $p_{\perp}$, which is the probability that two randomly SH vectors $\vec{v}_1,\vec{v}_2 \in V$ are orthogonal. We also calculate the lower- and upper- bound estimates of $p_{\perp}$, based on permutation and combination approximation. Then, we address the distribution of H-matrices across order based on known number of unique H-matrices and compare with the results of probabilistic analysis based on $p_{\perp}$.
\subsection{Orthogonal Probability Among SH-Vectors}
Simulation in the previous sub sections indicates the relationship between the probability of finding a semi-normalized Hadamard vector  to construct a $4k$ order semi-normalized Hadamard matrix, with both of the iteration stage and the order of the matrix.  In this section, we calculate the estimation of the probability values based on basic counting principles. In deriving the results, we use the following estimates of $q$-combination of $p$-objects, $\left( \begin{array}{c} p \\ q \end{array} \right) \equiv C(p,q)$, and permutation of $p$-objects, $p!$, formulas as follows,

\begin{equation}
\label{APPROX_1}
 \left(\frac{p}{q} \right)^q \leq C(p,q) \leq \left( \frac{ep}{q} \right)^q
\end{equation}

\begin{equation}
\label{APPROX_2}
 \frac{2^{2p}}{2 \sqrt{p}} \leq C(2p,p) \leq \frac{2^{2p}}{\sqrt{2p}} 
\end{equation}

\begin{equation}
\label{APPROX_3}
  \left( \frac{p}{e} \right)^p \leq p! \leq ep \left( \frac{p}{e} \right)^p
\end{equation}

From the Corollary \ref{CORO_Num_orthov}  of the previous section, we know that there are $N_O=C(2k,k)^2$ orthogonal vectors for any $4k$ order SH vector $\vec{v}_i \in V$. An estimate of $N_O$ is given by the following lemma.

\begin{mylemm}
The estimate number of orthogonal vector $N_O$ to a $4k$-order SH vector $\vec{v}_i \in V$, is 

\begin{equation}
\frac{2^{4k}}{4k} \leq N_O \leq \frac{2^{4k}}{2k}
\nonumber
\end{equation}

\end{mylemm}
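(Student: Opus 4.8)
The plan is to reduce everything to the already-stated bound in Eq.~\ref{APPROX_2} and then square. By Corollary~\ref{CORO_Num_orthov} we know exactly that $N_O = C(2k,k)^2$, so the whole task is to propagate the central-binomial-coefficient estimate through a squaring operation. First I would set $p = k$ in Eq.~\ref{APPROX_2}, which immediately yields
\begin{equation}
\frac{2^{2k}}{2\sqrt{k}} \leq C(2k,k) \leq \frac{2^{2k}}{\sqrt{2k}}.
\nonumber
\end{equation}

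Next, since every quantity appearing in this chain of inequalities is strictly positive, the squaring map $x \mapsto x^2$ is monotonically increasing on the relevant range, so the inequalities are preserved under squaring. Applying this to all three sides gives
\begin{equation}
\left( \frac{2^{2k}}{2\sqrt{k}} \right)^2 \leq C(2k,k)^2 \leq \left( \frac{2^{2k}}{\sqrt{2k}} \right)^2.
\nonumber
\end{equation}
The remaining work is routine simplification: the left-hand bound becomes $2^{4k}/(4k)$ and the right-hand bound becomes $2^{4k}/(2k)$, using $(2^{2k})^2 = 2^{4k}$, $(2\sqrt{k})^2 = 4k$, and $(\sqrt{2k})^2 = 2k$.

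Substituting $N_O = C(2k,k)^2$ into the middle term then delivers the claimed double inequality $2^{4k}/(4k) \leq N_O \leq 2^{4k}/(2k)$. There is no real obstacle here; the only point requiring a word of justification is the monotonicity of squaring, which is valid precisely because the central binomial coefficient and both of its bounds are positive. Everything else is algebraic bookkeeping that follows directly from the previously established approximation and counting result.
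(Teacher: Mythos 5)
Your proposal is correct and follows essentially the same route as the paper: both start from $N_O = C(2k,k)^2$ (Corollary \ref{CORO_Num_orthov}), apply the central binomial bound of Eq.~\ref{APPROX_2} with $p=k$, and square to obtain $2^{4k}/(4k) \leq N_O \leq 2^{4k}/(2k)$. Your version is slightly more careful in explicitly justifying that squaring preserves the inequalities for positive quantities, which the paper leaves implicit.
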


\begin{proof}
Based on the approximation formula of combination, we can derive the lower bound $N_{O,LB}$ and the upper-bound $N_{O,UB}$ estimates of the orthogonal vector as follows:

\begin{equation}
 \label{N_OLB}
 N_{O,LB} = \left( \frac{2^{2k}}{2\sqrt{k}} \right)^2=\frac{2^{4k}}{4k} 
 \nonumber
\end{equation}

\begin{equation}
 \label{N_OUB}
 N_{O,UB} = \left( \frac{2^{2k}}{\sqrt{2k}} \right)^2=\frac{2^{4k}}{2k} 
\nonumber
\end{equation}
By combining both of the bounds, we obtain the lemma. \end{proof}

We want to calculate the probability of two-randomly selected vectors $v_i, v_j \in V$ orthogonal to each other. Therefore, we also need to estimate the total number of SH vectors in $V$, which is given by the following lemma.

\begin{mylemm}
\label{LEMM_NV}
The estimate of the number of SH vectors $v_i \in V$, $N_V$, is given by $ \frac{2^{4k}}{2\sqrt{2k}} \leq N_V \leq \frac{2^{4k}}{2\sqrt{k}} $ 
\end{mylemm}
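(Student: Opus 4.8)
The plan is to reduce $N_V$ to a central binomial coefficient and then invoke the two-sided estimate in Equation \ref{APPROX_2} by a single substitution. First I would recall from Lemma \ref{LEMM_SH_number} that the number of distinct $4k$-order SH vectors is exactly $N_V = C(4k,2k)$. The key observation is that this is a central binomial coefficient in disguise: writing $4k = 2\cdot(2k)$, we have $N_V = C\bigl(2\cdot(2k),\,2k\bigr)$, which is precisely the quantity $C(2p,p)$ bounded in Equation \ref{APPROX_2} under the identification $p = 2k$.

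Next I would substitute $p = 2k$ directly into the estimate $\frac{2^{2p}}{2\sqrt{p}} \leq C(2p,p) \leq \frac{2^{2p}}{\sqrt{2p}}$. For the lower bound this gives $\frac{2^{4k}}{2\sqrt{2k}}$, since $2p = 4k$ and hence $2^{2p} = 2^{4k}$. For the upper bound I would compute $\frac{2^{2p}}{\sqrt{2p}} = \frac{2^{4k}}{\sqrt{4k}}$ and then simplify $\sqrt{4k} = 2\sqrt{k}$, obtaining $\frac{2^{4k}}{2\sqrt{k}}$. Combining the two then yields the asserted chain
\begin{equation}
\frac{2^{4k}}{2\sqrt{2k}} \leq N_V \leq \frac{2^{4k}}{2\sqrt{k}},
\nonumber
\end{equation}
which is exactly the statement of the lemma.

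There is no substantive obstacle here; the whole argument is a direct application of an already-stated approximation. The only point requiring care is the index bookkeeping, namely that the correct substitution is $p = 2k$ rather than $p = k$, so that the exponent reads $2^{2p} = 2^{4k}$ and the radicals simplify as $\sqrt{2p} = \sqrt{4k} = 2\sqrt{k}$ in the upper bound. Once these identifications are made correctly, the two bounds drop out immediately, so I would keep the proof to a single sentence invoking Lemma \ref{LEMM_SH_number} followed by the substitution into Equation \ref{APPROX_2}.
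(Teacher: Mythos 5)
Your proof is correct and follows essentially the same route as the paper: identify $N_V = C(4k,2k)$ via Lemma \ref{LEMM_SH_number} and substitute $p=2k$ into the central-binomial estimate of Equation \ref{APPROX_2}. In fact your careful bookkeeping of the upper bound, $\frac{2^{2p}}{\sqrt{2p}} = \frac{2^{4k}}{\sqrt{4k}} = \frac{2^{4k}}{2\sqrt{k}}$, is cleaner than the paper's own proof, which displays the upper bound with an apparent typo (repeating $\frac{2^{4k}}{2\sqrt{2k}}$) even though the lemma statement itself carries the correct value.
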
 

\begin{proof}
From the Lemma \ref{LEMM_SH_number}, we know that $N_V = C(4k,2k)$. By employing the approximation formulas, we obtain the lower bound $N_{V,LB}$ and upper bound $N_{V,UB}$ respectively as follows,

\begin{equation}
 \label{N_VLB}
 N_{V,LB} = \frac{2^{4k}}{2\sqrt{2k}} 
\nonumber
\end{equation}

\begin{equation}
 \label{N_OUB}
 N_{V,UB} = \frac{2^{4k}}{2 \sqrt{2k}} 
\nonumber
\end{equation}
Combining both of the bounds, we can state the results as given in the lemma. 
\end{proof}

Since we already know both of $N_O$ and $N_V$, we can determine the value of orthogonal probability between a pair of randomly selected SH-vectors. We formulate the result into the following theorem.

\begin{mytheo}
\label{THEO_SH_ProbOrtho}
Let $\vec{v}_i, \vec{v}_j \in V$ be two randomly selected $4k$-order SH vector, with $k$ a positive  integer. The probability $p_{\perp}$ that they are orthogonal is given by $p_{\perp}=\frac{C(2k,k)^2}{C(4k,2k)-1}$
\end{mytheo}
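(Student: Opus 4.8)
The plan is to reduce the statement to a direct counting ratio, using the regularity of the orthogonality relation that has already been established. The key observation is that orthogonality among SH vectors is a \emph{symmetric} relation, since the inner product satisfies $\left< \vec{v}_i, \vec{v}_j \right> = \left< \vec{v}_j, \vec{v}_i \right>$, and moreover no vector is orthogonal to itself because $\left< \vec{v}_i, \vec{v}_i \right> = 4k \neq 0$. Hence the $N_O$ vectors orthogonal to a given $\vec{v}_i$ are all genuinely distinct from $\vec{v}_i$, and the orthogonality structure on $V$ is exactly the undirected $N_O$-regular graph $\Gamma$ of Proposition \ref{PROP_SHGraph}.

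First I would set up the probability model: select the two SH vectors $\vec{v}_i, \vec{v}_j$ uniformly at random, distinctly, from $V$. I would condition on the first choice $\vec{v}_i$. Having fixed $\vec{v}_i$, the second vector $\vec{v}_j$ is drawn uniformly from the remaining $N_V - 1$ elements of $V$. By Corollary \ref{CORO_Num_orthov}, exactly $N_O = C(2k,k)^2$ of these are orthogonal to $\vec{v}_i$, and crucially this count does \emph{not} depend on which $\vec{v}_i$ was fixed, precisely because $\Gamma$ is $N_O$-regular. Therefore the conditional probability is
\begin{equation}
p_{\perp} = \frac{N_O}{N_V - 1}
\nonumber
\end{equation}
and since this value is independent of the first choice, it equals the unconditional orthogonal probability as well.

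Finally I would substitute the known cardinalities, namely $N_O = C(2k,k)^2$ from Corollary \ref{CORO_Num_orthov} and $N_V = C(4k,2k)$ from Lemma \ref{LEMM_SH_number}, to obtain
\begin{equation}
p_{\perp} = \frac{C(2k,k)^2}{C(4k,2k) - 1},
\nonumber
\end{equation}
which is the claimed result. As a consistency check one can recount via unordered pairs: the number of orthogonal pairs (edges of $\Gamma$) is $N_V N_O / 2$ out of $C(N_V,2) = N_V(N_V-1)/2$ total pairs, and the two factors of $N_V/2$ cancel to give the same ratio $N_O/(N_V-1)$. There is no genuinely hard step here; the only point requiring care is the denominator, where one must exclude $\vec{v}_i$ itself (giving $N_V - 1$, not $N_V$), and the appeal to regularity, which is what guarantees the conditional count $N_O$ is uniform across all choices of the first vector and hence that the simple ratio is exactly correct rather than merely an average.
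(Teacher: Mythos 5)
Your proof is correct and takes essentially the same approach as the paper: the paper computes $p_{\perp}$ as the ratio of the edge count of the $N_O$-regular orthogonality graph $\Gamma$ to that of the complete graph $K_{N_V}$, namely $\frac{N_V N_O/2}{N_V(N_V-1)/2}=\frac{N_O}{N_V-1}$, which is exactly the unordered-pair recount you offer as a consistency check. Your primary phrasing via conditioning on the first vector is the same calculation resting on the same regularity fact (Corollary \ref{CORO_Num_orthov} / Proposition \ref{PROP_SHGraph}), so there is no substantive difference.
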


\begin{proof}
Theorem \ref{THEO_SH_ortho} implies that a $4k$-order SH vector $\vec{v}_i \in V$ is orthogonal to (another) $N_O$-number of SH vectors $\vec{w}_j \in W \subset V$. Let $\Gamma$ be a (undirected) graph representing the orthogonality relationship among SH vectors in $V$. Then, according to Proposition \ref{PROP_SHGraph}, $\Gamma$ is an $N_O$-regular graph of order $N_V$. The orthogonal probability is therefore

\begin{equation}
p_{\perp}=\frac{Number \_ of \_ Edges\_in(\Gamma)}{Number\_of\_Edges\_in(K_{N_V})} = \frac{|\Gamma|}{|K_{N_V}|}
\nonumber
\end{equation}
where of $K_{N_V}$ is a complete graph of order $N_V$. Since $\Gamma$ is $N_O$ regular, the number of edges in is $|\Gamma|=N_V N_O /2$, whereas  the number of edges in $N_V$-order complete graph is $|K_{N_V}|=C(N_V,2)=\frac{N_V(N_V - 1)}{2}$. Therefore, the orthogonal probability is
\begin{equation}
p_{\perp}=\frac{|\Gamma|}{|K_{N_V}|}=\frac{N_O}{N_V-1} = \frac{C(2k,k)^2}{C(4k,2k)-1}
\nonumber
\end{equation}

\end{proof}
Consequently, we also obtain the following bounds of the orthogonal probability.
\begin{mycoro}
\label{CORO_Po}
Let $v_i,v_j \in V$, two randomly selected $m=4k$-order SH vectors. The estimate of probability that they are orthogonal is $\frac{1}{2 \sqrt{k}} \leq p_\perp \leq \sqrt{\frac{2}{k}}$.
\end{mycoro}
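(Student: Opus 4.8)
The plan is to combine the exact expression for $p_\perp$ established in Theorem~\ref{THEO_SH_ProbOrtho} with the two combinatorial interval estimates already proved. Writing $p_\perp = N_O/(N_V-1)$ as in that theorem, I would first note that for the purpose of these leading-order bounds the additive $-1$ in the denominator can be absorbed, so that effectively $p_\perp \approx N_O/N_V$; this is the only genuinely delicate point, and I return to it below. The remaining work is then pure substitution and cancellation.

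For the lower bound I would minimize the ratio by pairing the smallest admissible numerator with the largest admissible denominator, i.e. substitute $N_O \geq 2^{4k}/(4k)$ from the preceding lemma and $N_V \leq 2^{4k}/(2\sqrt{k})$ from Lemma~\ref{LEMM_NV}. The common factor $2^{4k}$ cancels and the arithmetic collapses to
\begin{equation}
p_\perp \geq \frac{2^{4k}/(4k)}{2^{4k}/(2\sqrt{k})} = \frac{2\sqrt{k}}{4k} = \frac{1}{2\sqrt{k}},
\nonumber
\end{equation}
which is exactly the claimed lower bound. For the upper bound I would do the opposite, using the largest numerator $N_O \leq 2^{4k}/(2k)$ and the smallest denominator $N_V \geq 2^{4k}/(2\sqrt{2k})$; again the powers of two cancel, giving
\begin{equation}
p_\perp \leq \frac{2^{4k}/(2k)}{2^{4k}/(2\sqrt{2k})} = \frac{\sqrt{2k}}{k} = \sqrt{\frac{2}{k}},
\nonumber
\end{equation}
the claimed upper bound. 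Combining the two displays yields the corollary.

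The step I expect to require the most care is the treatment of the $-1$ in the denominator $N_V-1$. Since $N_V-1 < N_V$, replacing $N_V-1$ by $N_V$ strictly \emph{decreases} the true value of $p_\perp$, so it is immediately safe for the lower bound but runs against the direction needed for a fully rigorous upper bound. Consequently the clean form $\sqrt{2/k}$ is best read as the leading-order (large-$k$) estimate; to turn it into an honest inequality valid for every $k$ one would instead carry $N_{V,LB}-1 = 2^{4k}/(2\sqrt{2k}) - 1$ in the denominator and verify that the induced correction is lower-order and absorbed, which is routine but slightly messy. Everything else reduces to substituting the two previously proved bracketing estimates for $N_O$ and $N_V$ and cancelling the shared factor $2^{4k}$.
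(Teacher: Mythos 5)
Your proposal is correct and follows essentially the same route as the paper: the paper's proof likewise invokes the $k\gg 1$ approximation $C(4k,2k)-1\simeq C(4k,2k)$ and then forms $p_{\perp,LB}=N_{O,LB}/N_{V,UB}$ and $p_{\perp,UB}=N_{O,UB}/N_{V,LB}$ with the identical cancellations of $2^{4k}$. Your extra remark about which direction the dropped $-1$ affects is a finer point than the paper bothers with, but it does not change the argument.
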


\begin{proof}

When $k \gg 1$, $C(4k,2k)-1 \simeq C(4k,2k)$, then the estimate of the lower-bound $p_{\perp,LB}$ and the upper-bound $p_{\perp,UB}$ values can be computed as follows:

\begin{equation}
p_{\perp,LB} =\frac{N_{O,LB}}{N_{V,UB}} = \frac{\frac{2^{4k}}{4k}}{\frac{2^{4k}}{2\sqrt{k}}}=\frac{2\sqrt{k}}{4k}=\frac{1}{2\sqrt{k}}
\nonumber
\end{equation}

\begin{equation}
p_{\perp,UB} =\frac{N_{O,UB}}{N_{V,LB}} = \frac{\frac{2^{4k}}{2k}}{\frac{2^{4k}}{2\sqrt{2k}}}=\frac{2\sqrt{2k}}{2k}=\sqrt{\frac{2}{k}}
\nonumber
\end{equation}
By combining both of the bounds, we will obtain the corollary.
\end{proof}
\subsection{Existence Probability, Distribution of The Hadamard Matrices, and Discrepancy Problem}

An H-matrix can be seminormalized to obtain an equivalent SH-matrix and then normalized to get an NH-matrix. We define the reverse of this process, i.e. obtaining equivalent SH-matrices from a given NH-matrix, as degeneration.

\begin{mydef}
\label{DEF_Degeneration}
Degeneration of an NH-matrix is a process of obtaining all equivalent (unique) SH-matrices from a given NH-matrix. The results are called (unique) degenerate SH-matrices.
\end{mydef}
Degeneration of an NH-matrix into unique SH-matrices can be done by combinations of column negation. Therefore, for an $m=4k$ order NH-matrix, there will be $2^{4k-1}$ ways to negate the columns, since the first column is retained as a unity vector $\vec{o}$. We call this number as the degenerate number $N_D$. Consequently, we have the following result.

\begin{mylemm}
A $4k$-order NH-matrix degenerates into $N_D=2^{4k-1}$ SH-matrices.
\end{mylemm}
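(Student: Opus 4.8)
The plan is to establish the count by exhibiting an explicit bijection between the admissible column-negation patterns and the resulting degenerate SH-matrices. First I would observe that a $4k$-order NH-matrix has its first column equal to the unity vector $\vec{o}$; since degeneration proceeds only by column negation, and the seminormalized form requires the first column to remain $\vec{o}$, the first column must be left untouched, while each of the remaining $4k-1$ columns may independently be either negated or retained. This gives $2^{4k-1}$ distinct negation patterns, each encoded by a subset $S \subseteq \{2,\dots,4k\}$ of columns to be negated.

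Next I would check that every such pattern produces a genuine SH-matrix. Negating a column leaves its entries in $\{+1,-1\}$, so the result is still a $\{\pm 1\}$-matrix. Moreover, negating any set of columns preserves orthogonality: for two columns $\vec{c}_p,\vec{c}_q$ the inner product $\left< \vec{c}_p,\vec{c}_q \right>$ is merely multiplied by $\pm 1$ according to how many of the pair were negated, so an orthogonal pair stays orthogonal, while each column's squared norm is unchanged. Hence the negated matrix is again Hadamard, and since its first column is still $\vec{o}$ it is seminormalized; therefore each pattern yields an SH-matrix.

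The main point, and the step that actually requires an argument rather than mere counting, is that distinct patterns yield distinct SH-matrices, i.e. that the map $S \mapsto (\text{negated matrix})$ is injective. I would argue as follows: if $S_1 \neq S_2$, pick a column index $c$ in their symmetric difference, so that column $c$ is negated in exactly one of the two matrices. Writing $\vec{u}$ for the corresponding column of the original NH-matrix, the two matrices carry $\vec{u}$ and $-\vec{u}$ in position $c$; since $\vec{u}$ is a nonzero $\{\pm 1\}$ vector we have $\vec{u} \neq -\vec{u}$, so the matrices differ. Thus the map is injective, its image consists of exactly $2^{4k-1}$ distinct degenerate SH-matrices, and $N_D = 2^{4k-1}$ as claimed. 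The only subtlety to keep in mind is that distinctness relies on degeneration being defined by column negation alone, with no accompanying row permutation, so that two different negation patterns cannot coincidentally collapse to the same matrix through a hidden equivalence.
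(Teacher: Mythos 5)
Your proof is correct and follows essentially the same counting idea as the paper, which merely observes that the first column must remain $\vec{o}$ while each of the other $4k-1$ columns may be independently negated or kept, giving $2^{4k-1}$ patterns. The paper states the lemma without a formal proof, so your added verifications---that every negation pattern preserves the $\{\pm 1\}$ entries, the orthogonality, and the seminormalization, and that distinct patterns yield distinct matrices because $\vec{u}\neq-\vec{u}$ for a $\{\pm1\}$ column---are a sound tightening of the same argument rather than a different route.
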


The number of unique H-matrix (up to equivalence) has been studied before and described in  \cite{Hall1961}, \cite{Hall1965}, \cite{Ito1979}, \cite{Kimura1989}, \cite{Kimura1994}, \cite{Kimura1994b}, \cite{Kimura1986}, \cite{Spence1995}, and \cite{Kharagani2010}. Currently, we know that the number of unique H-matrix of orders $4$, $8$, $12$, ..., and $32$ are, subsequently, $1$, $1$, $1$, $5$, $3$, $60$, $487$, and $13,710,027$, which we define as $N_{NH}$. The number of unique H-matrix of a particular order should equal to the number of unique H-matrix (actually an NH-matrix) multiplied by the degenerate number, therefore we have the following result.

\begin{myprop}
\label{LEMM_NSH}
The number of $N_{SH}$ of unique SH matrix of order $4k$ is $N_{SH}=N_{NH}\times 2^{4k-1}$.
\end{myprop}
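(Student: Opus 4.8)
The plan is to establish the claimed identity $N_{SH} = N_{NH} \times 2^{4k-1}$ by setting up a counting correspondence between the set of all unique SH-matrices of order $4k$ and the set of (unique) NH-matrices of the same order, using the degeneration process from Definition \ref{DEF_Degeneration} together with the degenerate number $N_D = 2^{4k-1}$ just computed. The key observation is that every SH-matrix arises as a degenerate of exactly one NH-matrix (its unique normalization obtained by fixing the first row via the appropriate column negations), so the degeneration map partitions the SH-matrices into disjoint classes indexed by the NH-matrices.

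The steps I would carry out are as follows. First I would recall that any SH-matrix can be normalized to an NH-matrix by negating precisely those columns whose first entry is $-1$; since the first column is $\vec{o}$, this normalization is uniquely determined and produces a unique NH-matrix. This gives a well-defined map $\phi$ from unique SH-matrices to unique NH-matrices. Second, I would argue that $\phi$ is surjective and that each fiber $\phi^{-1}(\vec{N})$ over a given NH-matrix $\vec{N}$ is exactly the set of degenerate SH-matrices of $\vec{N}$, which by the preceding lemma has cardinality $N_D = 2^{4k-1}$. Third, since the fibers are disjoint and cover all SH-matrices, summing over the $N_{NH}$ distinct NH-matrices gives
\begin{equation}
N_{SH} = \sum_{\vec{N}} \left| \phi^{-1}(\vec{N}) \right| = N_{NH} \times 2^{4k-1}
\nonumber
\end{equation}
which is the claimed result.

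The main obstacle I anticipate is verifying that the $2^{4k-1}$ column-negations of a fixed NH-matrix really do produce $2^{4k-1}$ \emph{distinct} SH-matrices, so that no overcounting occurs within a single fiber, and symmetrically that two different NH-matrices never share a degenerate SH-matrix. The distinctness within a fiber is subtle because an H-matrix may possess nontrivial automorphisms (column permutations composed with negations that fix the matrix), which could in principle make two different negation patterns yield the same SH-matrix; one must confirm that pure column-negations (without permutation) acting on a matrix with the first column fixed to $\vec{o}$ act freely, i.e. that distinct sign patterns on columns $2$ through $4k$ genuinely give distinct matrices. This follows because negating a column changes its entries, so distinct subsets of negated columns differ in at least one column and hence give distinct matrices. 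The cross-fiber disjointness follows from uniqueness of the normalization in the first step: the NH-matrix $\vec{N}$ is recovered from any of its degenerates by the canonical normalization $\phi$, so a given SH-matrix cannot lie in two fibers. Once these two freeness/uniqueness facts are pinned down, the counting identity is immediate.
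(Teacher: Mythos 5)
Your proof is correct and takes the same route the paper intends: the paper states this proposition with no formal proof at all, offering only the preceding sentence that the SH-count should equal the NH-count times the degenerate number $2^{4k-1}$, and your normalization map $\phi$ with its fiber decomposition is exactly the rigorous version of that assertion, including the two distinctness checks (within a fiber, and across fibers) that the paper silently assumes. The one caveat is a definitional mismatch that originates in the paper rather than in your argument: your fibers are indexed by \emph{distinct} NH-matrices, whereas the paper's $N_{NH}$ is introduced as the number of H-matrices \emph{up to equivalence} (the values $1,1,1,5,3,60,\dots$); already at order $4$ there are $6$ distinct NH-matrices but only $1$ equivalence class, so the identity you prove is the internally consistent one, and the paper's literal formula holds only under your reading of $N_{NH}$.
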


It is interesting to know, whether it is possible to estimate $N_{SH}$ by probabilistic analysis. To do this task, we need two quantities, i.e., the number $N_{QU}$ of (unique) QSH-matrix and the probability that the QSH-matrix is orthogonal $p_{\vec{H}|\vec{Q}}$, which will be described in the following discussions.

First, we will estimate the orthogonal probability  of the QSH-matrix based on an idea that is derived from the construction process of an SH-matrix described in Algorithm \ref{ALG_xhstvSearch}. In the algorithm, after setting the first column vector to $\vec{o}$, the algorithm randomly select the second vector  $\vec{v}_2 \in V$. We do not count the orthogonal-probability of $\vec{v}_2$ (and the next column vectors) to the existing vector in $\vec{H}$ since all of SH vectors in $V$ is orthogonal to  $\vec{o}$. In the next step, selection of the third vector is performed by randomly select a candidate vector $\vec{v}_r \in V$ and verify whether it is orthogonal to $v_2$ or not. The probability that it is orthogonal is $p_\perp$ given by Theorem \ref{THEO_SH_ProbOrtho}. After a successful test and obtain $\vec{v}_3$, our set of the selected vectors in this stage is $\{ \vec{v}_1,\vec{v}_2, \vec{v}_3\}$. The next step is obtaining the fourth vector $\vec{v}_4 \in V$ and check the orthogonality with the previously selected ones, i.e, it should pass the orthogonality test with $\{ \vec{v}_2, \vec{v}_3\}$ given by the following conditions

\begin{equation}
 (\left<\vec{v}_2,\vec{v}_3\right>=0) \wedge
 (\left<\vec{v}_3,\vec{v}_4\right>=0) \wedge
 (\left<\vec{v}_4,\vec{v}_2\right>=0)
\nonumber
\end{equation}
where $\wedge$ is the "AND" logical operator. We will denote two vectors $\vec{v}_i,\vec{v}_j $ orthogonal as $\left< \vec{v}_i, \vec{v}_j \right>=0 \equiv (\vec{v}_i \perp \vec{v}_j)$, so that we can rewrite the conditions more compactly as follows:

\begin{equation}
 (\vec{v}_2 \perp \vec{v}_3) \wedge
 (\vec{v}_3 \perp \vec{v}_4) \wedge
 (\vec{v}_4 \perp \vec{v}_2)
 \nonumber
\end{equation}

The analysis shows that, in general, the orthogonality conditions at the $i^{th}$-stage requires 2-combination of $(i-1)$ vectors. Therefore, this relationship can be represented as an $(i-1)$-order complete graph $K_{i-1}$ displayed in Fig.\ref{fig:pH_Kn}.

\begin{figure}
  \centering
  \includegraphics[scale=0.5]{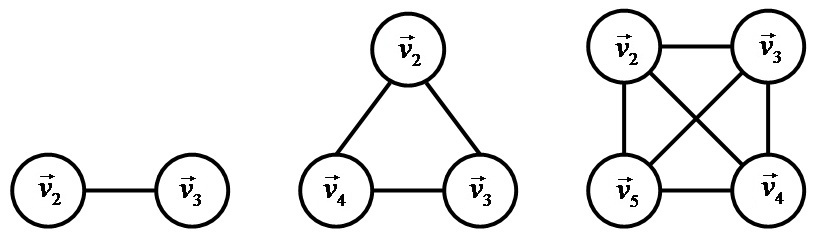}
  \caption{Orthogonality conditions that should be satisfied in each stage of the selection of a set of orthogonal vectors: two vectors: $ (\vec{v}_2 \perp \vec{v}_3)$, three vectors:  $(\vec{v}_2 \perp \vec{v}_3) \wedge
 (\vec{v}_3 \perp \vec{v}_4) \wedge
 (\vec{v}_4 \perp \vec{v}_2)$, four vectors:$ 
 (\vec{v}_2 \perp \vec{v}_3) \wedge (\vec{v}_3 \perp \vec{v}_4) \wedge (\vec{v}_4 \perp \vec{v}_5) \wedge
 (\vec{v}_5 \perp \vec{v}_2) \wedge (\vec{v}_2 \perp \vec{v}_4) \wedge (\vec{v}_3 \perp \vec{v}_5)$.}
  \label{fig:pH_Kn}
\end{figure}

In the construction of $4k$-order SH-matrix, the process of selection and orthogonality tests are conducted subsequently up to the $(4k-1)^{th}$ vector, $\vec{v}_{4k-1}$, is obtained; at this stage, the matrix is definitely an SH-matrix. The probability of this configuration is $p_\perp^{(4k-1)(4k-2)/2}$. Then, given a $4k$ order QSH-matrix, the probability that it is in fact an SH-matrix, referring to Fig.\ref{fig:pH_Kn}, can be formulated as follows.

\begin{myprop}
\label{PROP_ProbHExistence}
The probability $p_{\vec{H}|\vec{Q}}$ that a randomly generated $4k$-order QSH-matrix $\vec{Q}$ is actually an SH-matrix is 
 $p_{\vec{H}|\vec{Q}}= p_\perp^{(4k-1)(4k-2)/2}$.
\end{myprop}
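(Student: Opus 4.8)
The plan is to reduce the event ``$\vec{Q}$ is an SH-matrix'' to a conjunction of pairwise orthogonality conditions among the columns of $\vec{Q}$, discard the conditions that hold automatically, count the remaining ones exactly, and then evaluate the joint probability under the working assumption that these conditions are mutually independent.

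First I would recall that $\vec{Q}$ is a $4k \times 4k$ matrix whose first column is $\vec{o}$ and whose remaining $(4k-1)$ columns are SH vectors $\vec{v}_2, \ldots, \vec{v}_{4k} \in V$. By definition, $\vec{Q}$ is an SH-matrix precisely when every pair of distinct columns is orthogonal. I would partition these column pairs into two groups: the $(4k-1)$ pairs that involve the first column $\vec{o}$, and the pairs formed from two of the SH vectors. By Lemma \ref{LEMM_SH_ortho}, every SH vector is orthogonal to $\vec{o}$, so all conditions of the first group are satisfied automatically and contribute a factor of $1$ to the probability. Hence only the second group --- the pairwise orthogonality conditions among $\vec{v}_2, \ldots, \vec{v}_{4k}$ --- needs to be accounted for.

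Next I would count those remaining conditions. Each corresponds to a choice of two of the $(4k-1)$ SH-vector columns, so there are exactly $C(4k-1,2) = \frac{(4k-1)(4k-2)}{2}$ of them; equivalently, these are the edges of the complete graph $K_{4k-1}$ drawn in Fig.~\ref{fig:pH_Kn}. By Theorem \ref{THEO_SH_ProbOrtho}, each individual pairwise orthogonality event has probability $p_\perp$. Treating the $\frac{(4k-1)(4k-2)}{2}$ orthogonality events as mutually independent, the probability that all of them hold simultaneously is the product of their individual probabilities, which yields
\begin{equation}
p_{\vec{H}|\vec{Q}} = p_\perp^{\,(4k-1)(4k-2)/2}
\nonumber
\end{equation}
as claimed.

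The main obstacle is precisely the independence assumption used in the last step. The pairwise orthogonality events are in reality correlated: once several columns are constrained to be mutually orthogonal, the set of admissible remaining vectors shrinks, so the conditional probability of each new orthogonality condition generally differs from the marginal value $p_\perp$ established in Theorem \ref{THEO_SH_ProbOrtho}. A fully rigorous derivation would require the joint distribution of these events (or at least the conditional probabilities encountered along the selection process of Algorithm \ref{ALG_RVS}), which is not developed here. I therefore expect this proposition to be understood as an estimate obtained under the independence heuristic rather than an exact identity --- consistent with the discrepancy between the predicted and the known counts of SH-matrices that the paper sets out to analyze.
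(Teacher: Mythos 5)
Your proposal matches the paper's own proof: the paper likewise identifies the orthogonality conditions with the edges of the complete graph $K_{4k-1}$, counts $(4k-1)(4k-2)/2$ of them, and raises $p_\perp$ to that power under a tacit independence assumption. Your explicit flagging of that independence heuristic as the weak point is apt --- the paper only concedes it later, when attributing the discrepancy between $E[H]$ and $N_{SH}$ to the inadequacy of modeling the orthogonality graph as a random graph $G(N_V,p_{\perp})$.
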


\begin{proof}
Based on the complete-graph representation of orthogonality relationship, and since the number of edge $|K_{4k-1}|=(4k-1)(4k-2)/2$,  the probability that a randomly generated QSH matrix is an SH-matrix is $p_{\vec{H}|\vec{Q}}=p_{\perp}^{|K_{4k-1}|}= p_\perp^{(4k-1)(4k-2)/2}$.
 
\end{proof}

\begin{mycoro}
\label{CORO_pHQ}
  For $k \gg 1$, the estimate of $p_{\vec{H}|\vec{Q}}$ is given by $ \left(4k \right)^{-4k^2}\leq p_{\vec{H}|\vec{Q}} \leq \left( \frac{k}{2}\right)^{-4k^2}$
\end{mycoro}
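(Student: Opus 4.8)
The plan is to combine the exact expression from Proposition~\ref{PROP_ProbHExistence}, namely $p_{\vec{H}|\vec{Q}} = p_\perp^{(4k-1)(4k-2)/2}$, with the two-sided estimate $\tfrac{1}{2\sqrt{k}} \leq p_\perp \leq \sqrt{2/k}$ supplied by Corollary~\ref{CORO_Po}. First I would record the exponent $E \equiv (4k-1)(4k-2)/2 = 8k^2 - 6k + 1$ and observe that for $k \gg 1$ it is well approximated by $E \approx 8k^2$; this single replacement is what converts the exact probability into the claimed estimate.

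Next, since $p_\perp$ is a probability bounded by two quantities that are both strictly below $1$ once $k \geq 1$, I would invoke the monotonicity of the map $x \mapsto x^E$ for $x>0$ and $E>0$. This sandwiches the target quantity between $\bigl(\tfrac{1}{2\sqrt{k}}\bigr)^E$ and $\bigl(\sqrt{2/k}\bigr)^E$, assigning the smaller base to the lower bound and the larger base to the upper bound.

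The computation then reduces to a clean algebraic trick: write the exponent $8k^2 = 2 \cdot 4k^2$ so that the squaring absorbs the factor of two in each base. For the lower bound this gives $\bigl(\tfrac{1}{2\sqrt{k}}\bigr)^{8k^2} = \bigl[(2\sqrt{k})^{-2}\bigr]^{4k^2} = (4k)^{-4k^2}$, and for the upper bound $\bigl(\sqrt{2/k}\bigr)^{8k^2} = \bigl[(2/k)\bigr]^{4k^2} = (k/2)^{-4k^2}$, using $(2\sqrt{k})^2 = 4k$ and $(\sqrt{2/k})^2 = 2/k$. Combining the two ends yields the stated corollary.

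The step I expect to be the main obstacle --- or at least the one demanding honesty --- is the replacement of $E = 8k^2 - 6k + 1$ by $8k^2$ inside an exponent. Because the bases are less than $1$, shrinking the exponent from $8k^2$ back to the true value $E$ enlarges the power, so the exact $p_\perp$-chain in fact produces values slightly above $(k/2)^{-4k^2}$; the claimed upper bound is therefore not a strict enclosure but a leading-order asymptotic estimate, which is precisely why the hypothesis ``$k \gg 1$'' is imposed. I would make explicit that both displayed bounds retain only the dominant $8k^2$ term of the exponent and discard the lower-order corrections $-6k+1$, so that the corollary is read as an asymptotic estimate rather than a rigorous two-sided bound.
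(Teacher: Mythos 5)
Your proposal follows exactly the paper's route: substitute the bounds on $p_\perp$ from Corollary~\ref{CORO_Po} into $p_{\vec{H}|\vec{Q}} = p_\perp^{(4k-1)(4k-2)/2}$ from Proposition~\ref{PROP_ProbHExistence}, then replace the exponent by its leading term to obtain $(4k)^{-4k^2}$ and $(k/2)^{-4k^2}$. Your additional remark that dropping the $-6k+1$ correction in an exponent of a base less than $1$ makes the result an asymptotic estimate rather than a strict enclosure is a fair caveat that the paper itself glosses over, but it does not change the argument.
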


\begin{proof}
The lower-bound $p_{\vec{H}|\vec{Q},LB}$ and upper- bound $p_{\vec{H}|\vec{Q},UB}$ estimate of this value are:

\begin{equation}
 p_{\vec{H}|\vec{Q},LB}=\left( 2^{-1}k^{-1/2}\right)^{(4k-1)(4k-2)/2} = 
 \left(4k \right)^{-(4k-1)(4k-2)/4} \approx \left( 4k\right)^{-4k^2}
\nonumber
\end{equation}
\begin{equation}
p_{\vec{H}|\vec{Q},UB}=\left( \left( \frac{2}{k}\right)^{1/2}\right)^{(4k-1)(4k-2)/2} = 
\left( \frac{k}{2}\right)^{-(4k-1)(4k-2)/4} = \left( \frac{k}{2}\right)^{-4k^2}
\nonumber
\end{equation}
\end{proof}

After knowing the existence probability $p_{\vec{H}|\vec{Q}}$, we need to  calculate the number $N_{QU}$ of $4k$-order QSH-matrix, which is formulated into the followings.

\begin{myprop}
\label{PROP_NQU}
  The number $N_{QU}$ of unique QSH-matrix of order $4k$ is $N_{QU}=C(N_V, 4k-1)$, whose  bounds are given by $\left( \frac{2^{4k}}{8\sqrt{2} k^{3/2}} \right)^{4k} < N_{QU} < \left( \frac{2^{4k}}{8k^{3/2}} \right)^{4k}$.
\end{myprop}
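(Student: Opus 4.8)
The plan is to split the statement into two parts: the exact combinatorial identity $N_{QU}=C(N_V,4k-1)$, and then the two-sided estimate obtained by substituting the known bounds on $N_V$ into the combination estimate (\ref{APPROX_1}). The identity itself I would simply inherit from the earlier lemma on the number of unique QSH matrices: a unique $\vec{Q}$ is determined by an \emph{unordered} choice of $4k-1$ distinct SH vectors $\vec{v}_i\in V$ to occupy the columns after the fixed first column $\vec{o}$, and there are exactly $C(N_V,4k-1)$ such choices. So the real content is the estimate, and the task reduces to bounding this binomial coefficient.

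For the bounds I would apply (\ref{APPROX_1}) with $p=N_V$ and $q=4k-1$, and then feed in the bounds $\frac{2^{4k}}{2\sqrt{2k}}\le N_V\le\frac{2^{4k}}{2\sqrt{k}}$ from Lemma \ref{LEMM_NV}. Taking the smallest admissible value $N_{V,LB}=\frac{2^{4k}}{2\sqrt{2k}}$, using the lower form $(p/q)^q\le C(p,q)$, and approximating $4k-1\approx 4k$ in both base and exponent gives
\begin{equation}
\left(\frac{N_{V,LB}}{4k}\right)^{4k}=\left(\frac{2^{4k}}{8\sqrt{2}\,k^{3/2}}\right)^{4k},
\nonumber
\end{equation}
which is the claimed lower bound; running the identical computation with the largest admissible value $N_{V,UB}=\frac{2^{4k}}{2\sqrt{k}}$ yields $\left(\frac{2^{4k}}{8k^{3/2}}\right)^{4k}$, the claimed upper bound. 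The only algebra needed is the routine simplification $k\sqrt{2k}=\sqrt{2}\,k^{3/2}$ and $k\sqrt{k}=k^{3/2}$, together with cancellation of the $2$ in $N_V$ against the $4k$ in the denominator.

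The delicate point — and where I expect the only real friction — is the direction of the inequalities. A genuine \emph{upper} bound on a binomial coefficient from (\ref{APPROX_1}) comes through $C(p,q)\le(ep/q)^q$, which would introduce an extra factor $e^{4k}$ and spoil the clean constant $8$ in the stated upper bound. Producing $\left(\frac{2^{4k}}{8k^{3/2}}\right)^{4k}$ instead relies on the $(p/q)^q$ form evaluated at $N_{V,UB}$, which is really a lower estimate of $C(N_{V,UB},4k-1)$, so strictly it does not bound $C(N_V,4k-1)$ from above. I would therefore either present the proposition explicitly as an order-of-magnitude estimate valid for $k\gg 1$ (absorbing the $e$-type constants, consistently with the asymptotic spirit already used in Corollary \ref{CORO_pHQ}), or else carry the factor $e$ and tighten the constant accordingly.

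A secondary thing to check is the replacement of the exponent $4k-1$ by $4k$: since the base $\frac{2^{4k}}{8k^{3/2}}$ exceeds $1$ for $k\gg1$, raising to the larger power $4k$ inflates the value, so one must verify that this step does not flip the strict inequalities. This too is harmless in the large-$k$ regime where these estimates are intended to live, and I would state the hypothesis $k\gg1$ at the outset so that the $4k-1\approx 4k$ and $C(4k,2k)-1\approx C(4k,2k)$ type simplifications are all licensed at once.
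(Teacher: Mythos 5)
Your proposal follows essentially the same route as the paper's own proof: the identity $N_{QU}=C(N_V,4k-1)$ by counting, then substituting $N_{V,LB}$ and $N_{V,UB}$ into the $(p/q)^q$ form and simplifying $2(4k-1)\sqrt{2k}\approx 8\sqrt{2}\,k^{3/2}$ and $4k-1\approx 4k$ for $k\gg 1$. The weakness you flag — that evaluating $(p/q)^q$ at $N_{V,UB}$ yields a lower estimate of $C(N_{V,UB},4k-1)$ rather than a true upper bound on $C(N_V,4k-1)$ — is present in the paper's proof as well, so your reading of the result as an order-of-magnitude estimate is the correct interpretation.
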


\begin{proof}
   We know that there are $N_V$ number of $4k$-order SH vectors, which is to be arranged in $4k-1$ number of column to become a QSH-matrix. Then, according to basic counting, we will have the number of QSH-matrix $N_{QU} = C(N_V,4k-1)$. The estimate of $N_{QU}$ is based on the estimate of the combination given by Eq. \ref{APPROX_2} and approximation of $N_V$ given by Lemma \ref{LEMM_NV}. Accordingly, we obtain the following bounds:

\begin{equation}
 N_{QU,LB} = \left( \frac{\frac{2^{4k}}{2 \sqrt{2k}}}{(4k-1)} \right)^{4k-1} = 
 \left( \frac{2^{4k}}{2(4k-1)\sqrt{2k}} \right)^{4k-1}
 \nonumber
\end{equation}

\begin{equation}
 N_{QU,UB} = \left( \frac{\frac{2^{4k}}{2 \sqrt{k}}}{(4k-1)} \right)^{4k-1} = 
 \left( \frac{2^{4k}}{2(4k-1)\sqrt{k}} \right)^{4k-1}
 \nonumber
\end{equation}
when $ k\gg$, we obtain $N_{QU,LB}=\left( \frac{2^{4k}}{8\sqrt{2} k^{3/2}} \right)^{4k} $ and  $N_{QU,LB}=\left( \frac{2^{4k}}{8k^{3/2}} \right)^{4k}$. 

\end{proof}
By using this result and the estimate of existence probability, we can calculate the expectation number of $4k$ order SH-matrix, $E[H] \approx p(H|Q)\times N_{QU}$, which is formulated as follows.

\begin{mylemm}
\label{LEMM_NSHPROB}
  The expectation number $E[H]$ of unique SH-matrix of order $4k$ is
  
  \begin{equation}
    \frac{2^{8k^2-14k}}{k^{4k^2+6k}} <E[H]< \frac{2^{20k^2-12k}}{k^{4k^2+6k}}
  \nonumber    
  \end{equation}
  
\end{mylemm}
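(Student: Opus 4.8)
The plan is to combine the two previously established estimates multiplicatively, since by definition $E[H] \approx p_{\vec{H}|\vec{Q}} \times N_{QU}$. Both factors already come equipped with lower and upper bounds: Corollary \ref{CORO_pHQ} gives $(4k)^{-4k^2} \leq p_{\vec{H}|\vec{Q}} \leq (k/2)^{-4k^2}$, and Proposition \ref{PROP_NQU} gives $\left( \frac{2^{4k}}{8\sqrt{2}\, k^{3/2}} \right)^{4k} < N_{QU} < \left( \frac{2^{4k}}{8 k^{3/2}} \right)^{4k}$. Since all quantities are positive, the lower bound on $E[H]$ is obtained by multiplying the two lower bounds, and the upper bound by multiplying the two upper bounds.

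The key steps are then purely algebraic bookkeeping. First I would compute the lower bound: multiply $(4k)^{-4k^2}$ by $\left( \frac{2^{4k}}{8\sqrt{2}\, k^{3/2}} \right)^{4k}$. Writing $4k = 2^2 k$, the factor $(4k)^{-4k^2} = 2^{-8k^2} k^{-4k^2}$, while the $N_{QU,LB}$ factor expands as $2^{16k^2} / (8\sqrt{2})^{4k} k^{6k} = 2^{16k^2} \cdot 2^{-14k} \cdot k^{-6k}$, using $(8\sqrt{2})^{4k} = (2^{7/2})^{4k} = 2^{14k}$. Collecting powers of $2$ gives $2^{16k^2 - 8k^2 - 14k} = 2^{8k^2 - 14k}$ and powers of $k$ gives $k^{-4k^2 - 6k}$, which yields the claimed lower bound $\frac{2^{8k^2-14k}}{k^{4k^2+6k}}$. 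The upper bound is handled the same way: multiply $(k/2)^{-4k^2} = 2^{4k^2} k^{-4k^2}$ by $\left( \frac{2^{4k}}{8 k^{3/2}} \right)^{4k} = 2^{16k^2} \cdot 2^{-12k} \cdot k^{-6k}$ (here $8^{4k} = 2^{12k}$), giving $2^{4k^2 + 16k^2 - 12k} = 2^{20k^2 - 12k}$ over $k^{4k^2 + 6k}$, matching the stated upper bound $\frac{2^{20k^2-12k}}{k^{4k^2+6k}}$.

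The only nonroutine point — and the place I would be most careful — is the direction of the inequalities when forming the product. Because $p_{\vec{H}|\vec{Q}}$ and $N_{QU}$ are each bounded \emph{independently} rather than being exactly known, the product of the two lower bounds is a valid lower bound for $E[H]$ and the product of the two upper bounds a valid upper bound, so the chaining is legitimate provided every factor is positive, which it is for $k \geq 1$. I would note that this treats $E[H] = p_{\vec{H}|\vec{Q}} \cdot N_{QU}$ as exact in the sense of an expectation of a sum of indicator variables over the $N_{QU}$ unique QSH-matrices, each orthogonal with probability $p_{\vec{H}|\vec{Q}}$; by linearity of expectation this product is the expected count regardless of dependence between the indicators, so no independence assumption is needed. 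The remaining work is just confirming the exponent arithmetic above, in particular the two simplifications $(8\sqrt2)^{4k}=2^{14k}$ and $8^{4k}=2^{12k}$, after which combining the bounds gives the lemma directly.
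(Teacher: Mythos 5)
Your proposal is correct and follows essentially the same route as the paper: both take $E[H]=p_{\vec{H}|\vec{Q}}\times N_{QU}$, multiply the lower bounds from Corollary \ref{CORO_pHQ} and Proposition \ref{PROP_NQU} for the lower estimate and the upper bounds for the upper estimate, and the exponent arithmetic (including $(8\sqrt{2})^{4k}=2^{14k}$ and $8^{4k}=2^{12k}$) matches the paper's simplification to $2^{8k^2-14k}k^{-(4k^2+6k)}$ and $2^{20k^2-12k}k^{-(4k^2+6k)}$. Your added remark justifying $E[H]=p_{\vec{H}|\vec{Q}}\cdot N_{QU}$ via linearity of expectation over indicator variables is a small refinement the paper leaves implicit, but it does not change the argument.
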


\begin{proof}
%% ------ PROOF STARTS ----
  The expectation is given by $E[H] = p_{\vec{H}|\vec{Q}} \times N_{QU}$, whose lower-bound is $E[H]_{LB} = p_{\vec{H}|\vec{Q},LB} \times N_{QU,LB}$ and upper-bound is $E[H]_{UB} = p_{\vec{H}|\vec{Q},UB} \times N_{QU,UB}$. By using Corollary \ref{CORO_pHQ} and the estimate of $N_{QU}$ given by Proposition \ref{PROP_NQU}, and by assuming $k\gg$ we obtain the followings for the lower-bound  
  
  \begin{equation}
   E[H]_{LB} = \left(4k \right)^{-4k^2} \times \left( \frac{2^4k}{2(4k-1)\sqrt{2k}} \right)^{4k-1} \approx \frac{2^{-8k^2}k^{-4k^2} \times 2^{16k^2}}{2^{12k} k^{4k} 2^{2k} k^{2k}}
   \nonumber
  \end{equation}
  simplification will gives $E[H]_{LB} =\frac{2^{8k^2-14k}}{k^{4k^2+6k}} $. The upper bound is calculated as follows
  
  \begin{equation}
   E[H]_{UB} = \left(k/2 \right)^{-4k^2} \times \left( \frac{2^{4k}}{2(4k-1)\sqrt{k}} \right)^{4k-1} \approx \frac{2^{4k^2}k^{-4k^2} \times 2^{16k^2}}{2^{4k}2^{8k}k^{4k}k^{2k}}
   \nonumber
  \end{equation}
  which simplifies into $E[H]_{UB}=\frac{2^{20k^2-12k}}{k^{4k^2+6k}}$.
%% ------ PROOF ENDS ----
\end{proof}

\begin{figure}
  \centering
  \includegraphics[scale=0.5]{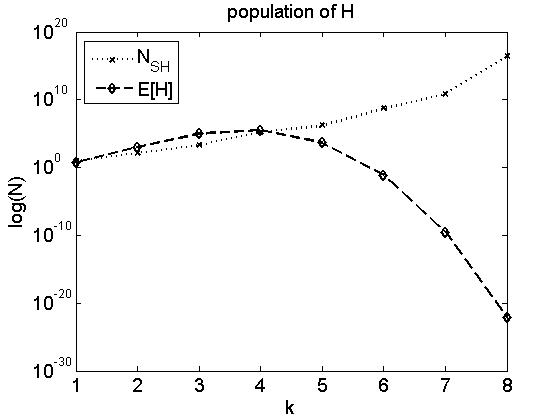}
  \caption{Comparison of $E[H]$ with $N_{SH}$ for $k=1, ..., 8$ shows the discrepancy between known number of unique SH-matrix and estimated number derived from probabilistic analysis.}
  \label{fig:EH_vs_NSH}
\end{figure}

We compare the expectation number $E[H]$ of SH-matrix given by Lemma \ref{LEMM_NSHPROB} with the number of SH-matrix $N_{SH}$ given by Proposition \ref{LEMM_NSH} for $k=1$ up to $k=8$ (matrix order 4 up to 32) by computation and plot the result in Figure \ref{fig:EH_vs_NSH}. The curves show discrepancy between $E[H]$ and $N_{SH}$ and indicate that started from $k=5$ that corresponds to order 20, the expected number of SH-matrix $E[H]$ cease to exist. Of course it contradicts with the fact that the number of the H-matrix should have been increasing with the increase of the order suggested by $N_{SH}$. This discrepancy might indicate that analyzing the orthogonality graph as a random graph $G(N_V,p_{\perp})$ is not sufficient for calculating the distribution of the SH-matrix across the orders.
%
%--Section 5: Conclusion .. is it necessary for Math paper ?
\section{Conclusion}
We have presented a probabilistic construction method of SH-matrix, which in principle is generally applicable for any $m=4k$ order. The method build the SH matrix by selecting the column vector from a predefined SH vectors. We have also formulated  important properties of the SH-vectors, estimating the distribution of SH-matrix across the order, and found discrepancy between the expected value of the number of H-matrix obtained by the probabilistic analysis with the known value.
% *** Back Matters ***

%

\end{document}